\newtheorem{thm}{Theorem}
\newtheorem{lem}{Lemma}
\newtheorem{cor}{Corollary}
\newtheorem{exam}{Example}
\newtheorem{rem}{Remark}
\def\0{{\mathbf 0}}
\newcommand{\F}{\mathbb{F}}
\newcommand{\Z}{\mathbb{Z}}
\begin{document}

\sloppy

\title{MDS linear codes with one dimensional hull }

\author
{
{Lin Sok\thanks{This research work is supported by Anhui Provincial Natural Science Foundation with grant number 1908085MA04.

Lin Sok is with School of Mathematical Sciences, Anhui University,  230601 Anhui,  P. R. China,
(email: soklin\_heng@yahoo.com). }
}
}

\date{}



\maketitle


\begin{abstract}
We define the Euclidean hull of a linear code $C$ as the intersection of $C$ and its Euclidean dual $C^\perp$. The hull with low dimensions gets much interest due to its crucial role in determining the complexity of algorithms for computing the automorphism group of a linear code and checking permutation equivalence of two linear codes. It has been recently proved that any $q$-ary $[n,k]$ linear code with $q>3$ gives rise to a linear code with the same parameters and having zero dimensional Euclidean hull, which is known as a linear complementary dual code.
This paper aims to explore explicit constructions of families of MDS linear codes with one dimensional Euclidean hull. We obtain several classes of such codes.

\end{abstract}
{\bf Keywords:} Hull, MDS code, generalized Reed-Solomon code, algebraic geometry code, differential algebraic geometry code\\

\section{Introduction}
MDS codes form an optimal family of classical codes. They are closely related to combinatorial designs \cite[p. 328]{MacSlo}, and finite geometries \cite[p. 326]{MacSlo}. Due to their largest error correcting capability for given length and dimension, MDS codes are of great interest in both theory and practice. The most well-known family of MDS linear codes is that of Reed-Solomon codes. MDS linear codes exist in a very restrict condition on their lengths as the famous MDS conjecture states: for every linear
$[n, k, n -k + 1]$ MDS code over $\F_q,$ if $1 < k < q,$ then $n \le q + 1,$ except when $q$ is even and $k = 3 $ or $k = q -1,$
in which cases $n \le q + 2.$ The conjecture was proved by Ball \cite{Ball} for $q$ a prime. However, for some special classes of linear codes, the conjecture may not be true.

The hull of a linear code was studied by Assmus {\em et al.} \cite{AssKey} to classify finite projective planes. In coding theory, the hull of a linear code plays a crucial role in determining the complexity of algorithms for computing the automorphism group of a linear code \cite{Leon82} and for checking permutation equivalence of two linear codes \cite{Leon91,Sendrier00}. In general, the algorithms have been proved to be very effective if the size of the hull is small. We can summarize the work on the hulls of linear codes as follows. Sendrier \cite{Sendrier97} determined the expected dimension of the hull of a random $[n,k]$ code when $n$ and $k$ go to infinity. Skersys \cite{Ske} gave the average dimension of the hulls of cyclic codes. Sangwisut {\em et al. }\cite{SangJitLingUdom} gave enumerations of cyclic codes and negacyclic codes of length $n$ with hulls of a given dimension.

There have been a lot of research on constructions of two types of hulls: the one with dimension zero known as linear complementary dual (LCD) code and that with dimension half of the code length known as self-dual code. We refer to \cite{Gue,GraGul,JinXin,Yan,FangFu,TongWang} for the work on families of MDS self-dual codes and \cite{CarGui,CarGunOzbOzkSol,CarMesTanQiPel18,CarMesTanQi19,
CarMesTanQi18-2,CarMesTanQi19-2,ChenLiu,
Jin,JinBee,LiDingLi,LiLiDingLiu,Massey, MesTanQi,ShiYueYan,YanLiuLiYang} for the LCD codes.

However, there have been very little work on constructing linear codes with other hull dimensions mentioned above except \cite{CarLiMes,LiZeng}. In this work, we will consider the constructions of MDS linear codes with one dimensional hull. We use tools from algbraic function fields in one variable to study such codes. Sufficient conditions for a code to have one dimensional hull are given, and we explicitly construct many families of MDS linear codes with one dimensional hull. Those families are contained in the following theorem.
\begin{thm}
Assume that $q=p^m, m\ge 1,$ is a prime power.
\begin{enumerate}[I.]
\item If $q$ is even and $n\le q-2,$ then there exists a $1$-$d$-hull MDS code with parameters $[n,n-s-1]$ for any $1\le s\le n-3$ (see Corollary \ref{cor:even q}).

\item If $q>5$ is odd, 

\begin{enumerate}[i.]
\item $N\le q-2$, and set
$$K=
\begin{cases}
N-2s-1\text{ if $N$ is even and }1\le s\le \frac{N}{2}-2, \\
N-2s-2\text{ if $N$ is odd and }0 \le s\le \frac{N+1}{2}-3,\\
\end{cases}
$$ 
then there exist $1$-$d$-hull MDS codes with parameters $[N,K]$ and $[N,N-K]$ with the following conditions (see Corollary \ref{cor:square h}): 

	\begin{enumerate}[(1)]
\item $p|N$, $(N-1)|(q-1),$ $N$ even, 
\item $q$  square, $(N-1)|(q-1)$, $N$ even,

\item $ N|\frac{(q-1)}{2}$, 
		\begin{enumerate}[a)]
		\item $N$  even,
		\item $q\equiv 1\pmod 4$, $N$ odd,
		\end{enumerate}

\item $1\le r< m$, $\frac{n(p^r+1)}{2(p^r-1)}$ odd, $N=(t+1)n$, $n=\frac{q-1}{p^r+1}$ a square, $t$ odd, $1\le t\le \lfloor \frac{q-2}{n} \rfloor -1,$
\item $1\le r< m$, $\frac{n(p^r+1)}{2(p^r-1)}$ even, $N=(t+1)n$, $n=\frac{q-1}{p^r+1}$ a square, $1\le t\le \lfloor \frac{q-2}{n} \rfloor -1,$
\item $1\le r< m$, $N=(t+1)n$, $n=\frac{q-1}{p^r-1}$ a square, $r|\frac{m}{2}$, $1\le t\le p^r-2,$
\item $m=2m_0,q_0=p^{m_0}$, $t$ even, $1\le t< q_0$, $N=q_0t,$
\item  $m=2m_0,q_0=p^{m_0}$, $t$ odd, $1\le t< q_0$ and $1\le t\le q_0-1$ for $q=9$, $N=q_0t+1,$

\item $m$ even, $r=p^{m_0},{m_0}|\frac{m}{2},N=2tr^\ell,0\le \ell<m/{m_0},1\le t\le \min((r-1)/2 ,\lfloor \frac{q-2}{2r^\ell} \rfloor),$\\
\item $q\equiv 1\pmod 4, N=2p^\ell,0< \ell <m,$\\
\item $m$ even, $r=p^{m_0},{m_0}|\frac{m}{2},N=(2t+1)r^\ell+1,0\le \ell <m/{m_0},0 \le t \le \min((r-1)/2 ,\lfloor \frac{q-2-r^\ell}{2r^\ell} \rfloor),$\\
\item $q\equiv 1 \pmod 4 ,N=p^\ell+1,0< \ell< m,$\\
	\end{enumerate}

\item $n\le q-1,$ $p|n$ and $(n-1)|(q-1),$ then there exist  $1$-$d$-hull MDS codes with parameters $[n,n-2s+1]$ and  $[n,2s-1]$ for $1\le s\le \lfloor \frac{n}{2}\rfloor$ (see Corollary \ref{con00:gen1}),
\item $1\le r<m$ and $r|m$, then there exist $1$-$d$-hull MDS codes with parameters $[p^r,p^r-2s+1]$ and  $[p^r,2s-1]$ for $1\le s\le \frac{p^r-1}{2}$ (see Corollary \ref{con0:gen1}),

\item $n\le (q-2)$, then there exist $1$-$d$-hull MDS codes with parameters  as follows (see Corollary \ref{con1:gen1}):
\begin{enumerate}
\item  $[n,n-2s+1]$ and $[n,2s-1]$ for $1\le s\le \lfloor (n-1)/2\rfloor,$
\item  $[n,n-2s]$ and $[n,2s]$ for $n$ even and $1\le s\le n/2-1$,
\end{enumerate}

\item $m\ge 2$, $ r\le m-1$, $r|m$, and set $N=(t+1)p^r$ with $\gcd (p,t+1)=1$ and $1\le t\le  \lfloor \frac{q-1-p^r}{2p^r} \rfloor$, then there exist $1$-$d$-hull MDS codes with parameters $[N,N-2s+1]$ and $[N,2s-1]$ for $1\le s\le \lfloor p^r/2\rfloor $ (see Corollary \ref{con3:gen1}),

\item $n|(q-1)$, and set $N=(t+1)n$ with $1\le t\le \lfloor \frac{q-n-2}{2n}\rfloor,$ then there exist $1$-$d$-hull MDS codes with parameters  as follows (see Corollary \ref{con2:gen1}):
\begin{enumerate}[(1)]
\item  $[N,n-2s+1]$ and $[N,tn+2s-1]$ for $1\le s\le \lfloor (n-1)/2\rfloor$, $p\not|(t+1),$
\item  $[N,2n-2s+1]$ and $[N,(t-1)n+2s-1]$ for $1\le s\le \lfloor (2n-1)/2\rfloor$, $p|(t+1),$
\item  $[N,n-2s]$ and $[N,tn+2s]$ for $n$ even, $1\le s\le \lfloor (n-1)/2\rfloor$, $p\not|(t+1),$
\item  $[N,2n-2s]$ and $[N,(t-1)n+2s]$ for $n$ even, $1\le s\le \lfloor (2n-1)/2\rfloor$, $p|(t+1),$
\item  $[N,N-2s+1]$ and $[N,2s-1]$ for $1\le s\le \lfloor (n-1)/2\rfloor$, $p\not|(t+1),$
\item  $[N,N-2s+1]$ and $[N,2s-1]$ for $1\le s\le \lfloor (2n-1)/2\rfloor$, $p|(t+1)$,
\item  $[N,N-2s]$ and $[N,2s]$ for $n$ even, $1\le s\le \lfloor (n-1)/2\rfloor$, $p\not|(t+1),$
\item  $[N,N-2s]$ and $[N,2s]$ for $n$ even, $1\le s\le \lfloor (2n-1)/2\rfloor$, $p|(t+1).$
\end{enumerate}

\end{enumerate}
\end{enumerate}
\end{thm}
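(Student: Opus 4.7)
The plan is to derive every family listed in the theorem from a small number of master constructions for generalized Reed--Solomon (GRS), extended GRS, and algebraic geometry codes, and then to specialize. The cornerstone is a characterization lemma: for $C=\mathrm{GRS}_k(\mathbf{a},\mathbf{v})$ with dual $\mathrm{GRS}_{n-k}(\mathbf{a},\mathbf{v}')$, the intersection $C\cap C^{\perp}$ is isomorphic to the space of polynomials $f\in\F_q[x]_{<k}$ such that $(w_i f(a_i))_{i}$, with $w_i=v_i/v'_i$, is the evaluation at $\mathbf{a}$ of some polynomial of degree $<n-k$. This reduces the hull-dimension question to a rank problem on a Vandermonde-type matrix with twisted rows $(w_i a_i^j)$, which is controllable whenever $\{a_i\}$ carries multiplicative or additive structure; in particular, $\dim(C\cap C^\perp)=1$ becomes a linear condition on the $w_i$ that can be verified coset-by-coset.

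Next I would construct the evaluation sets $\{a_i\}$ and the multipliers $v_i$ family by family. Part~I (even $q$) is easiest: every element of $\F_q^*$ is a square, so the $w_i$ can be chosen freely and the hull condition reduces to an elementary Vandermonde argument. For II.i, the evaluation sets come from (a) full multiplicative subgroups of $\F_q^*$, (b) unions of multiplicative cosets, or (c) subfields and their affine translates; the Lagrange denominators $u_i=\prod_{j\neq i}(a_i-a_j)$ then admit closed forms because $\prod_{h\in H}(x-h)=x^{|H|}-c$ for a multiplicative subgroup $H$. The arithmetic hypotheses in each subcase --- $(N-1)\mid(q-1)$, $q$ a square, $p\mid N$, $q\equiv 1\pmod 4$, $r\mid m$ --- are precisely what is needed so that the prescribed $w_i$ admit square roots (or $(p^r\pm 1)$-th roots) in $\F_q$, which is what guarantees $v_i\in\F_q$. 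Parts III--V treat the subfield case; parts VI--VII extend a subgroup $H$ to $(t+1)|H|$ points via unions of multiplicative cosets, with $w_i$ chosen coset-constant so that the master lemma still applies. For families that cannot be realized as pure GRS codes --- notably certain extensions implicit in II.vii --- I would invoke the functional/differential duality of algebraic geometry codes, $C_L(D,G)^\perp\cong C_\Omega(D,G)$, and use a canonical differential to rephrase the hull condition as another symmetric-function identity in $\F_q$. The passage $[N,K]\leftrightarrow[N,N-K]$ in each part is free, since a linear code and its dual share their Euclidean hull.

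The main obstacle is organizational rather than conceptual: in each of the twelve subcases of II.i and the eight subcases of II.vii one must verify both that the prescribed multipliers actually lie in $\F_q$ and that the kernel of the twisted evaluation map has dimension \emph{exactly} one, neither zero nor larger. The cleanest strategy is to state a single master proposition per construction (GRS on a coset union, extended GRS, AG code on a Kummer-type cover) and then read off each item of the theorem as its specialization; every divisibility hypothesis in the statement traces back to requiring a product of the form $\prod_{h\in\text{orbit}}h$ to be a square --- or the appropriate higher power --- in $\F_q$.
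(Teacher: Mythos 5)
Your overall direction coincides with the paper's: the paper works with genus-zero algebraic geometry codes, which it identifies with GRS codes, and every hypothesis of the theorem does come down to making Lagrange-type products $h'(\alpha_i)=\prod_{j\neq i}(\alpha_i-\alpha_j)$ (possibly multiplied by auxiliary polynomial values) into nonzero squares on structured evaluation sets. The genuine gap is at the decisive step: the ``master proposition'' that pins the hull dimension at \emph{exactly} one is never formulated, and your claim that $\dim(C\cap C^{\perp})=1$ ``becomes a linear condition on the $w_i$'' is not correct. With $w_i=v_i^2h'(\alpha_i)$ the hull is the kernel of the $k\times k$ matrix with entries $\sum_i v_i^2\alpha_i^{j+t}$, so hull dimension one is a rank statement: one must exhibit a nonzero hull vector \emph{and} exclude dimension $0$ and dimension $\geq 2$. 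The paper achieves this structurally (Lemmas \ref{lem:characterization}, \ref{lem:generalized} and the theorem following them): take $G=(n-2)P_\infty$, $A=(a)_0$, $B=(b)_0$ with $\gcd(a,b)=1$, $\deg a+\deg b=n-2$ and $(abh')(\alpha_i)$ nonzero squares; the dual is then identified with $C_{\cal L}(D,G-B+(z))$ and the intersection is shown to be precisely the image of ${\cal L}(G-A-B)$, which has dimension one because $G-A-B$ is principal of degree zero (Lemma \ref{prop:dim-principal}). Nothing in your sketch plays this role, and deferring it as ``organizational'' understates it: it is the mathematical core of the theorem.

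Second, the bulk of the statement consists of case verifications that your plan only gestures at. For II.i(1)--(12) the paper checks (partly by citation to \cite{Sok}, \cite{Sok-arxiv}, \cite{Yan}, \cite{FangFu}) that $h'(\alpha)$ is a square on the specific sets $U$ (sets $\{\alpha:\alpha^N=\alpha\}$, multiplicative subgroups and unions of their cosets, $\F_{q_0}$-affine sets $\{a_k\beta+a_j\}$, translates of $\F_r$-subspaces), and the length caps such as $t\le\lfloor\frac{q-2}{n}\rfloor-1$ come from reconciling those citations with $N\le q-2$. More importantly, items II.ii--II.vii are exactly the situations where the $h'(\alpha_i)$ are \emph{not} all squares; the paper handles them with Theorem \ref{thm:$1$-d-generalized} by factoring $(h')_0=2(f)_0+(g_1)_0+(g_2)_0$, setting $A=2E+(g_1)_0$ and $B=(g_2)_0+(\cdots)P_\infty$, and proving existence of a degree-one place $F\notin\{P_\infty\}\cup \mathrm{supp}(D)\cup \mathrm{supp}((g_1)_0)\cup \mathrm{supp}((g_2)_0)$ -- this is where constraints like $t\le\lfloor\frac{q-n-2}{2n}\rfloor$ originate -- and the particular dimensions in II.vii (e.g.\ $[N,n-2s+1]$ versus $[N,2n-2s+1]$ according to whether $p\mid(t+1)$) are forced by the degree of the non-square factor $g$ of $h'$. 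Your ``coset-constant $w_i$'' and appeal to $C_{\cal L}(D,G)^{\perp}\cong C_{\Omega}(D,G)$ do not explain how the non-square part of $h'$ is absorbed nor how these specific $K$ values arise, so as it stands the proposal is a plausible roadmap in the same spirit as the paper rather than a proof.
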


The paper is organized as follows: Section \ref{section:pre} gives preliminaries and background on algebraic geometry (AG) codes. Section \ref{section:characterization} provides a characterization of a genus zero AG code to have one dimensional hull. Section \ref{section:constructions} gives some methods to construct MDS linear codes with one dimensional hull. We give a concluding remark in Section \ref{sec:conclusion}.

\section{Preliminaries}\label{section:pre}

Let $\F_q$ be the finite field with $q$ elements. A 
linear code of length $n$ and dimension $k$ over ${{\mathbb F}_q},$ denoted as $q$-ary $[n,k]$ code, is a $k$-dimensional subspace
of  $ {\mathbb F}_q^n$. The
(Hamming) weight wt$({\bf{x}})$ of a vector ${\bf{x}}=(x_1, \dots,
x_n)$ is the number of nonzero coordinates in it. The {\em minimum
distance ({\rm{or}} minimum weight) $d(C)$} of $C$ is
$d(C):=\min\{{\mbox{wt}}({\bf{x}})~|~ {\bf{x}} \in C, {\bf{x}} \ne
{\bf{0}} \}$. The parameters of an $[n,k]$ code with minimum distance $d$ are written $[n,k,d]$. 
If $C$ is an $[n,k,d]$ code, then from the Singleton bound, its minimum distance is bounded above by
$$d\le n-k+1.$$
A code meeting the above bound is called {\em Maximum Distance Separable} ({MDS}).
The {\em Euclidean inner product} of ${\bf{x}}=(x_1,
\dots, x_n)$ and ${\bf{y}}=(y_1, \dots, y_n)$ in ${\mathbb F}_q^n$ is
${\bf{x}}\cdot{\bf{y}}=\sum_{i=1}^n x_i y_i$. The {\em dual} of $C$,
denoted by $C^{\perp}$, is the set of vectors orthogonal to every
codeword of $C$ under the Euclidean inner product. The (Euclidean) {\em hull} of a linear code $C$ is defined as

$$hull(C)=C\cap C^\perp. $$

If the hull of a linear code $C$ has dimension $s$, then we call $C$ an $s$-$d$-hull code.  With this definition, a linear complementary dual (LCD) code is a $0$-$d$-hull code and a self-dual code of length $n$ is a $\frac{n}{2}$-$d$-hull code.

We refer to Stichtenoth \cite{Stich} for undefined terms related to algebraic function fields. We denote the rational function field of one variable $\F_q(x)$ by $\cal {F}$ and the set of places of $\F_q(x)$ by ${\cal X}$. For an element $\alpha\in \F_q,$ let $P_{\alpha}$ denote the zero place of $x-\alpha$ and $P_\infty$ its pole place. A divisor $G$ of $\cal {F}$ is a formal sum $\sum\limits_{P\in {\cal X}}n_PP$ with only finitely many nonzeros $n_P\in \Z$. The support of $G$ is defined as $supp(G):=\{P|n_P\not=0\}$. The degree of $G$ is defined by $\deg(G):=\sum\limits_{P\in {\cal X}}n_P\deg(P)$. 
For two divisors $G=\sum\limits_{P\in {\cal X}}n_PP$ and $H=\sum\limits_{P\in {\cal X}}m_PP$,  we say that $G\ge H$ if $n_P\ge m_P$ for all places $P\in {\cal X}$.

It is well-known that a nonzero polynomial $f(x)\in \F_q(x)$ can be factorized into irreducible factors as $f(x)=\alpha \prod\limits_{i=1}^s p_i(x)^{e_i},$ with $\alpha\in \F_q^*.$ Moreover, any irreducible polynomial $p_i(x)$ corresponds to a place, say $P_i$. We define the valuation of $f$ at $P_i$ as $v_{P_i}(f):=t$ if $p_i(x)^t|f(x)$ but $p_i(x)^{(t+1)}\not|f(x).$ 

For $f\in\F_q(x)$, we define
$$
\begin{array}{c}
(f)_0:=\sum\limits_{P\in Z(f)}v_{P}(f)P, \text{ the zero divisor of } f,\\
(f)_\infty:=\sum\limits_{P\in N(f)}-v_{P}(f)P, \text{ the pole divisor of } f,\\
(f):=(f)_0-(f)_\infty,\text{ the principal divisor of } f,\\
\end{array}
$$
where $Z(f)$ and  $N(f)$ denotes the set of zeros and poles of $f$, respectively. Hence, $$(f)=\sum\limits_{P\in {\cal X}}v_P(f)P.$$ It is well known that the principal divisor has degree $0.$

We say that two divisors $G$ and $H$ are equivalent if $G=H+(y)$ for some rational function $y\in \F_q(x).$
For a divisor $G$, we define
$${\cal L}(G):=\{f\in \F_q(x)\setminus \{0\} |(f)+G\ge 0\}\cup \{0\},$$ 
and 
$${\Omega}(G):=\{\omega\in \Omega_{\cal F}\setminus \{0\} |(\omega)-G\ge 0\}\cup \{0\},$$
where $\Omega_{\cal F}:=\{fdx|f\in \cal {F}\}$, the set of differential forms. 

The dimension of ${\cal L}(G)$ is denoted by $\ell (G),$ and is determined by Riemann-Roch's theorem as follows.

\begin{thm}\cite[Theorem 1.5.15 (Riemann-Roch)]{Stich} Let $W$ be a canonical divisor. Then, for each divisor $G$, the following holds:
$$\ell (G) = \deg G + 1 -g + \ell(W-G),$$
where $g$ is the genus of the smooth algebraic curve.
\end{thm}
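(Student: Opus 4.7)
The plan is to follow the classical adele-theoretic proof due to Weil, since the statement is the general Riemann--Roch for a function field of any genus $g$. The argument splits naturally into first proving an \emph{inequality} (Riemann's theorem, $\ell(G) \geq \deg G + 1 - g$) and then upgrading it to the stated equality via a duality between the Riemann--Roch space and the space of Weil differentials. Since the paper works only with the rational function field ($g=0$), a direct proof is available there via partial fractions, but I will sketch the general argument used in Stichtenoth.

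For the inequality, I introduce the adele ring $\mathcal{A}_{\mathcal{F}} = \prod'_{P \in \mathcal{X}} \mathcal{F}_P$, the restricted product of local completions, into which $\mathcal{F}$ embeds diagonally. For a divisor $G = \sum n_P P$ set
$$\mathcal{A}_{\mathcal{F}}(G) = \{\alpha \in \mathcal{A}_{\mathcal{F}} : v_P(\alpha_P) \geq -n_P \text{ for all } P\}.$$
Then $\mathcal{L}(G) = \mathcal{F} \cap \mathcal{A}_{\mathcal{F}}(G)$, and comparing filtrations for $G_1 \leq G_2$ gives $\dim_{\F_q}(\mathcal{A}_{\mathcal{F}}(G_2)/\mathcal{A}_{\mathcal{F}}(G_1)) = \deg G_2 - \deg G_1$. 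A snake-lemma type argument then produces the formula
$$\ell(G) - i(G) = \deg G + 1 - g,$$
where $i(G) := \dim_{\F_q}(\mathcal{A}_{\mathcal{F}}/(\mathcal{A}_{\mathcal{F}}(G) + \mathcal{F}))$ and $g := \sup_G (\deg G + 1 - \ell(G))$, which one shows is finite by a direct dimension count using a separating transcendence element. This constant $g$ is declared to be the genus.

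To identify the defect $i(G)$ with $\ell(W - G)$, I introduce the space $\Omega_{\mathcal{F}}$ of Weil differentials: $\F_q$-linear functionals $\omega : \mathcal{A}_{\mathcal{F}} \to \F_q$ vanishing on $\mathcal{A}_{\mathcal{F}}(G) + \mathcal{F}$ for some $G$. For nonzero $\omega$ there is a largest such $G$, denoted $(\omega)$, and by duality the dimension of differentials with $(\omega) \geq G$ equals $i(G)$. The structural key is that $\Omega_{\mathcal{F}}$ is $1$-dimensional over $\mathcal{F}$: granting this, fix a nonzero $\omega_0$ and set $W := (\omega_0)$; every Weil differential is $y\omega_0$ for unique $y \in \mathcal{F}$, and $(y\omega_0) \geq G$ is equivalent to $(y) \geq G - W$, i.e. $y \in \mathcal{L}(W - G)$. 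Hence $i(G) = \ell(W - G)$, which substituted into the previous display yields the theorem.

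The main obstacle is the rigidity statement $\dim_{\mathcal{F}} \Omega_{\mathcal{F}} = 1$. Non-triviality follows because Riemann's inequality forces $i(G) \to \infty$ as $\deg G \to -\infty$, producing nonzero differentials. The upper bound is delicate: if $\omega_1, \omega_2$ were $\mathcal{F}$-linearly independent, then for any divisor $H$ the combinations $\{f\omega_1 + h\omega_2 : f,h \in \mathcal{L}(H)\}$ would supply at least $2\ell(H)$ independent Weil differentials vanishing on $\mathcal{A}_{\mathcal{F}}(G_0)$ for a suitable $G_0$ depending linearly on $H$. Letting $\deg H \to \infty$ and comparing with the bound $i(G_0) \leq g - \deg G_0 - 1 + \ell(G_0)$ from Riemann's inequality produces a contradiction, since the degrees of freedom grow twice as fast as the constraints permit. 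This rigidity is the heart of the argument; everything else is bookkeeping with the filtration and verifying that $W$ is well-defined up to linear equivalence, giving the canonical class.
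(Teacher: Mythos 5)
Your sketch is essentially correct, but note that the paper does not prove this statement at all: it is quoted verbatim from the cited reference (Stichtenoth, Theorem 1.5.15), and your adelic argument — Riemann's inequality via the filtration on $\mathcal{A}_{\mathcal{F}}(G)$, the index of specialty $i(G)$, Weil differentials, one-dimensionality of $\Omega_{\mathcal{F}}$ over $\mathcal{F}$, and the identification $i(G)=\ell(W-G)$ — is precisely the proof given in that source. So you have correctly reproduced the standard proof behind the citation, which is all that can be compared here.
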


For a special divisor $G$, we can determine the dimension of the space ${\cal L}(G)$ as follows.
\begin{lem}\cite[Corollary 1.4.12]{Stich} \label{prop:dim-principal}Assume that a divisor $G$ has degree zero. Then $G$ is principal
if and only if $\ell (G)=1.$
\end{lem}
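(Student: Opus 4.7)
The plan is to prove both directions using the basic fact that a divisor $D$ with $D\ge 0$ and $\deg D = 0$ must be the zero divisor, together with the fact that the only rational functions with nonnegative principal divisor (no poles) are the constants in $\F_q^*$ (plus zero).

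For the forward direction, I would assume $G$ is principal, say $G=(f)$ for some nonzero $f\in\F_q(x)$. First I would check that $1/f \in \mathcal{L}(G)$ by noting $(1/f)+G = -(f)+(f) = 0 \ge 0$. Then, given any nonzero $h\in\mathcal{L}(G)$, the divisor $(h)+(f) = (hf)$ is nonnegative; but $\deg(hf)=0$, so $(hf)=0$, forcing $hf\in\F_q^*$. Hence $\mathcal{L}(G)=\F_q\cdot(1/f)$, a one-dimensional $\F_q$-vector space, so $\ell(G)=1$.

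For the reverse direction, I would assume $\deg G = 0$ and $\ell(G)=1$. Pick any nonzero $f\in\mathcal{L}(G)$; then $(f)+G\ge 0$ and $\deg((f)+G)=\deg(f)+\deg(G)=0+0=0$. A nonnegative divisor of degree zero is the zero divisor, so $(f)+G=0$, giving $G=-(f)=(1/f)$, which is principal.

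I do not anticipate any real obstacle: the argument hinges only on the degree-zero bookkeeping of divisors and the fact that $\deg(f)=0$ for every principal divisor (Stichtenoth, Theorem~1.4.11). The one place to be a bit careful is ensuring that the constants obtained from $(hf)=0$ really lie in the full constant field of $\cal F$, but since here we are working with $\F_q(x)$ whose constant field is $\F_q$, this is immediate. The statement is an ``if and only if'' but both implications are one-line consequences of the same degree computation, so the proposal is essentially symmetric.
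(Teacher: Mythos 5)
Your proof is correct and is essentially the standard argument (the one in the cited Stichtenoth Corollary 1.4.12, which the paper quotes without proof): both directions come from the facts that principal divisors have degree zero, a nonnegative divisor of degree zero is the zero divisor, and $\mathcal{L}(0)$ is exactly the constant field $\F_q$. Your remark about the full constant field is the right point of care, and it is indeed immediate here since the constant field of $\F_q(x)$ is $\F_q$.
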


For any place $P\not=P_\infty,$ let $ v_P{(fdx)}:=v_P{(f)}$ and $ v_{P_\infty}{(fdx)}:=v_{P_\infty}{(f)}-2.$ For an element $\alpha\in \F_q$ and $f\in \cal {F}$ with $v_{P_\alpha}(f)\ge -1$, it is well known that $f(x)$ can be expanded in the neighborhood of $\alpha$ as follows
$$f(x)=\cdots+\frac{a_{-1}}{x-\alpha}+a_0+a_1(x-\alpha)+\cdots .$$
If $f(x)$ is in the above form, the residue $Res_{P_\alpha}(fdx)$ of $fdx$ at $P_\alpha$ is defined to be $Res_{P_\alpha}(fdx):=a_{-1}$.

Over $\F_q(x)$, there are $q+1$ places of degree one, that is, the zero places $P_1,\hdots,P_q$ and the pole place $P_\infty $ (see \cite[Proposition 1.2.1]{Stich}).

Through out the paper, we let $D=P_1+\cdots+P_n$, called the rational divisor, where $P_i:=P_{\alpha_i},\alpha_i \in \F_q,$ for ${1\le i \le n},$ are places of degree one. 

For $G$ a divisor with $supp(D)\cap supp(G)=\emptyset$, define the algebraic geometry code by
$$
C_{\cal L}(D,G):=\{(f(P_1),\hdots,f(P_n))|f\in {\cal L}(G)\},
$$
and the differential algebraic geometry code as
$$
C_{\Omega}(D,G):=\{(Res_{P_1}(\omega),\hdots,Res_{P_n}(\omega))|\omega\in {\Omega}(G-D)\}.
$$

The parameters of an algebraic geometry code $C_{\cal L}(D,G)$ is given as follows.
\begin{thm}\cite[Corollary 2.2.3]{Stich}\label{thm:distance} Assume that $2g -2 < deg(G) < n.$ Then the code $C_{\cal L}(D,G)$  has parameters $[n,k,d]$ satisfying
\begin{equation}
k=\deg (G)-g+1\text{ and } d\ge n-\deg (G).
\label{eq:distance}
\end{equation}

\end{thm}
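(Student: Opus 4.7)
The plan is to split the statement into three parts: injectivity of the evaluation map, the dimension formula via Riemann--Roch, and the lower bound on the distance. Consider the evaluation map $\varphi:{\cal L}(G)\to \F_q^n$ defined by $\varphi(f)=(f(P_1),\ldots,f(P_n))$. This map is clearly $\F_q$-linear and its image is exactly $C_{\cal L}(D,G)$, so the dimension of the code equals the dimension of the image of $\varphi$.

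First I would prove that $\varphi$ is injective. If $\varphi(f)=0$ for some $f\in {\cal L}(G)$, then $v_{P_i}(f)\ge 1$ for each $i$, so $(f)\ge D$ at every place in $supp(D)$. Combined with $(f)+G\ge 0$ and the assumption $supp(D)\cap supp(G)=\emptyset$, this gives $(f)+(G-D)\ge 0$, i.e. $f\in {\cal L}(G-D)$. By hypothesis $\deg(G-D)=\deg(G)-n<0$, and since every nonzero rational function has a principal divisor of degree $0$, Lemma~\ref{prop:dim-principal}-type reasoning shows ${\cal L}(G-D)=\{0\}$. Thus $f=0$ and $k=\ell(G)$.

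Second, I would compute $\ell(G)$ by applying the Riemann--Roch theorem stated in the excerpt: $\ell(G)=\deg(G)+1-g+\ell(W-G)$. Since $\deg(W)=2g-2<\deg(G)$, we have $\deg(W-G)<0$, hence $\ell(W-G)=0$ by the same principal-divisor argument, yielding $k=\deg(G)-g+1$. For the distance bound, take a nonzero codeword $c=\varphi(f)$ of weight $w$, and let $S=\{i:f(P_i)=0\}$, so $|S|=n-w$. Then $f\in {\cal L}\bigl(G-\sum_{i\in S}P_i\bigr)$, and since $f\neq 0$ this space is nontrivial, forcing $\deg\bigl(G-\sum_{i\in S}P_i\bigr)=\deg(G)-(n-w)\ge 0$, which rearranges to $w\ge n-\deg(G)$.

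The genuinely delicate point is the injectivity step, which crucially combines $\deg(G)<n$ with the disjointness $supp(D)\cap supp(G)=\emptyset$ to convert pointwise vanishing into a divisor inequality $(f)+(G-D)\ge 0$; without the disjointness one could not simultaneously subtract $D$ and retain the $\ge 0$ condition at places inside $supp(G)$. The dimension formula and the Goppa distance bound are then routine consequences of Riemann--Roch and the fact that a nonzero principal divisor has degree zero.
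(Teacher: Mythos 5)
Your proposal is correct, and it is essentially the standard argument from the cited source: the paper itself gives no proof of this statement, quoting it directly from Stichtenoth (Corollary 2.2.3), whose proof is exactly your three steps — injectivity of the evaluation map from $\deg(G)<n$, the dimension count $k=\deg(G)-g+1$ from Riemann--Roch with $\ell(W-G)=0$ since $\deg(W-G)<0$, and the Goppa bound by placing a nonzero $f$ in ${\cal L}\bigl(G-\sum_{i\in S}P_i\bigr)$. One cosmetic remark: the fact you need in the injectivity and vanishing steps is that ${\cal L}(A)=\{0\}$ whenever $\deg(A)<0$ (which your parenthetical degree argument already supplies), rather than Lemma~\ref{prop:dim-principal}, which concerns degree-zero divisors.
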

For $g=0$, from (\ref{eq:distance}), we get $k=\deg (G)+1$ and $d\ge n-\deg (G)$ and thus the Singleton bound holds with equality and the code $C_{\cal L}(D,G)$ is MDS.

 For ${\bf a}=(\alpha_1,\hdots,\alpha_n),{\bf v}=(v_1,\hdots,v_n)\in \F_q^n$ such that $\alpha_1,\hdots,\alpha_n$ are all distinct, and $v_1,\hdots,v_n$ are all nonzero, it is well known that the generalized Reed-Solomon code defined by 
$$
\begin{array}{ll}
GRS_k({\bf a},{\bf v}):=\{(v_1f(\alpha_1),\hdots, v_nf(\alpha_n))|f(x)\in \F_q(x),\deg{f}\le k-1\}&\\
\end{array}
$$
 is an MDS code. Furthermore, it is shown 
in \cite[Proposition 2.3.3]{Stich}, that any algebraic geometry code $C_{\cal L}(D,G)$ with $\deg (G)=k-1$ is equal to the generalized Reed-Solomon code $GRS_k({\bf a},{\bf v})$ defined above. 

Moreover, their parameters are related as follows. For all $1\le i \le n,$\\

$
\begin{cases}
\alpha_i=x(P_i),\\
v_i=u(P_i)
\text{ for some $u(x)\in \F_q(x)$ satisfying }\\
  (u)=(k-1)P_{\infty}-G .\\
\end{cases}
$\\

For $0\le j\le k-1$, the vectors
$$(ux^j(P_1),\hdots,ux^j(P_{n}))=(v_1\alpha_1^j,\hdots,v_{n}\alpha_{n}^j)$$ constitute a basis of $C_{\cal L}(D,G)$, and thus, a generator matrix of $C_{\cal L}(D,G)$ can be expressed as

\begin{equation*}
\left(
\begin{array}{cccc}
v_1&v_2&\hdots&v_{n}\\
v_1\alpha_1&v_2\alpha_2&\cdots&v_{n}\alpha_{n}\\
\vdots&\vdots&\cdots&\vdots\\
v_1\alpha_1^{k-2}&v_2\alpha_2^{k-2}&\cdots&v_{n}\alpha_{n}^{k-2}\\
v_1\alpha_1^{k-1}&v_2\alpha_2^{k-1}&\cdots&v_{n}\alpha_{n}^{k-1}\\
\end{array}
\right).
\end{equation*}

Equivalence of two algebraic geometry codes is characterized through the associated divisors as follows.
\begin{lem}\cite[Proposition 2.2.14]{Stich}\label{prop:stich2} Assume that two divisors $G$ and $H$ are equivalent. Then ${\cal L}(G)$ and  ${\cal L}(H)$ are isomorphic as vector spaces. Moreover, the codes $C_{\cal L}(D,G)$ and $C_{\cal L}(D,H)$ are equivalent.
\end{lem}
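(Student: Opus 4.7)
The plan is to exploit the defining relation $G = H + (y)$ for some nonzero $y \in \F_q(x)$ and show that multiplication by $y$ induces both the required $\F_q$-vector space isomorphism of Riemann-Roch spaces and, after a coordinatewise scaling, the required equivalence of codes.

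First I would define the $\F_q$-linear map $\phi \colon {\cal L}(G) \to {\cal L}(H)$ by $\phi(f) = fy$ (with $\phi(0) = 0$). For nonzero $f \in {\cal L}(G)$ one has $(f) + G \ge 0$, and since $(fy) = (f) + (y) = (f) + G - H$, this forces $(fy) + H \ge 0$, so $fy \in {\cal L}(H)$. Injectivity is immediate because $y \ne 0$, and the inverse map $g \mapsto g/y$ lands in ${\cal L}(G)$ by the symmetric computation $(g/y) + G = (g) + H \ge 0$; hence $\phi$ is an isomorphism.

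Next I would transport this isomorphism to the code level. Choosing a basis $f_1,\ldots,f_k$ of ${\cal L}(G)$, the vectors $f_1 y,\ldots,f_k y$ form a basis of ${\cal L}(H)$, so the $i$-th rows of the natural generator matrices of $C_{\cal L}(D,G)$ and $C_{\cal L}(D,H)$ read $(f_i(P_1),\ldots,f_i(P_n))$ and $(f_i(P_1)y(P_1),\ldots,f_i(P_n)y(P_n))$ respectively. Thus $C_{\cal L}(D,H)$ is obtained from $C_{\cal L}(D,G)$ by scaling the $j$-th coordinate by $y(P_j)$, which is the desired monomial equivalence, provided each $y(P_j)$ is a well-defined nonzero element of $\F_q$.

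The one technical check is precisely this last point, and it is the only place where the hypotheses on $D$ are used. From $(y) = G - H$ one gets $\mathrm{supp}((y)) \subseteq \mathrm{supp}(G) \cup \mathrm{supp}(H)$, and by the standing assumption on algebraic geometry codes both $\mathrm{supp}(G)$ and $\mathrm{supp}(H)$ are disjoint from $\mathrm{supp}(D)$. Hence $v_{P_j}(y) = 0$ for every $P_j$ appearing in $D$, so $y(P_j) \in \F_q \setminus \{0\}$ as required. There is no genuine obstacle here; once the disjoint-support hypothesis is invoked, the rest reduces to formal divisor bookkeeping.
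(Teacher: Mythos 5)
Your argument is correct and is essentially the proof of the cited result (Stichtenoth, Proposition 2.2.14), which the paper itself only quotes without reproving: multiplication by $y$ with $G=H+(y)$ gives the isomorphism ${\cal L}(G)\to{\cal L}(H)$, and the disjoint-support hypothesis makes each $y(P_j)$ a nonzero scalar, yielding the diagonal (monomial) equivalence of $C_{\cal L}(D,G)$ and $C_{\cal L}(D,H)$. No gaps; the only cosmetic caveat is that the evaluated basis vectors span the codes even when evaluation is not injective, which is all your scaling argument needs.
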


The dual of the algebraic geometry code $C_{\cal L}(D,G)$ can be described as follows.

\begin{lem}\cite[Theorem 2.2.8]{Stich}\label{lem:01} With the above notation, the two codes $C_{\cal L}(D,G)$ and $C_{\Omega}(D,G)$ are dual to each other.
\end{lem}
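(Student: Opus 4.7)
The plan is to establish the duality in two steps: first show the containment $C_{\mathcal{L}}(D,G) \subseteq C_{\Omega}(D,G)^{\perp}$ via the residue theorem, and then prove the dimensions add up to $n$.

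For the orthogonality step, I take arbitrary $f \in \mathcal{L}(G)$ and $\omega \in \Omega(G-D)$, corresponding to generic codewords of $C_{\mathcal{L}}(D,G)$ and $C_{\Omega}(D,G)$ respectively. The key observation is to examine the divisor of the product $f\omega$. From $(f) + G \ge 0$ and $(\omega) - (G - D) \ge 0$, one obtains
$$(f\omega) = (f) + (\omega) \ge -G + (G - D) = -D,$$
so $f\omega$ has at most simple poles, and these can only occur among $P_1, \dots, P_n$. Since the hypothesis $\mathrm{supp}(D) \cap \mathrm{supp}(G) = \emptyset$ ensures $f$ is regular at each $P_i$, a direct local-expansion computation gives $\mathrm{Res}_{P_i}(f\omega) = f(P_i)\,\mathrm{Res}_{P_i}(\omega)$. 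Applying the residue theorem (the sum of all residues of a differential on the rational function field is zero) yields
$$\sum_{i=1}^n f(P_i)\,\mathrm{Res}_{P_i}(\omega) = \sum_{P \in \mathcal{X}} \mathrm{Res}_P(f\omega) = 0,$$
which is precisely the Euclidean orthogonality of the two codewords.

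For the dimension step, I would use the standard identifications $\dim C_{\mathcal{L}}(D,G) = \ell(G) - \ell(G - D)$ (via the evaluation map, whose kernel is $\mathcal{L}(G - D)$) and, analogously, $\dim C_{\Omega}(D,G) = \ell(W - G + D) - \ell(W - G)$ for a canonical divisor $W$, exploiting the standard isomorphism $\Omega(H) \simeq \mathcal{L}(W - H)$. Plugging both into Riemann--Roch (Theorem 1 in the excerpt) applied to $G$ and to $G - D$ produces
$$\ell(G) - \ell(G - D) = n + \ell(W - G) - \ell(W - G + D),$$
which rearranges to $\dim C_{\mathcal{L}}(D,G) + \dim C_{\Omega}(D,G) = n$. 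Combined with the containment above, this forces $C_{\mathcal{L}}(D,G)^{\perp} = C_{\Omega}(D,G)$.

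The main obstacle here is the residue computation at places of degree one, together with justifying that there are no contributions from places outside $\mathrm{supp}(D)$; the disjointness of supports is essential to avoid spurious poles of $f$ interfering with $\omega$. The Riemann--Roch bookkeeping is routine but must be applied to both $G$ and $G - D$ (not just $G$) in order for the $n = \deg D$ term to emerge cleanly.
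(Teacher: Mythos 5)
Your proposal is correct and follows essentially the same route as the source the paper cites for this lemma (Stichtenoth, Theorem 2.2.8), namely the residue theorem applied to $f\omega$ with $(f\omega)\ge -D$ to get the containment, followed by the Riemann--Roch dimension count $\dim C_{\mathcal L}(D,G)+\dim C_{\Omega}(D,G)=n$ using $\ell(G)-\ell(G-D)$ and $\ell(W-G+D)-\ell(W-G)$; the paper itself offers no proof, quoting the result. Nothing further is needed.
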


Moreover, the differential code $C_{\Omega}(D,G)$ is determined as follows.
\begin{lem}\cite[Proposition 2.2.10]{Stich}\label{lem:1} With the above notation,
$C_{\Omega}(D,G)=C_{\cal L}(D,D-G+(\omega))$ for some differential function $\omega$ satisfying $v_{P_i}(\omega)=-1$ and $Res_{P_i}(\omega)=1$ for $1\le i \le n.$ 
\end{lem}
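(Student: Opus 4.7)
The plan is to exhibit an explicit differential $\omega$ with the stated local behaviour and then identify $C_\Omega(D,G)$ with $C_{\cal L}(D,D-G+(\omega))$ through a multiplication-by-$\omega$ bijection between the underlying spaces. In the rational function field $\F_q(x)$ used throughout this paper, with $P_i=P_{\alpha_i}$, I would take
\[
\omega \defeq \frac{f'(x)\,dx}{f(x)}, \qquad f(x)\defeq\prod_{i=1}^{n}(x-\alpha_i).
\]
The partial fraction identity $f'(x)/f(x)=\sum_{i=1}^n 1/(x-\alpha_i)$ shows that near each $P_{\alpha_i}$ one has $\omega=\bigl(\tfrac{1}{x-\alpha_i}+\text{holomorphic}\bigr)\,dx$, whence $v_{P_i}(\omega)=-1$ and $\mathrm{Res}_{P_i}(\omega)=1$, as required. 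Any further poles of $\omega$ (here only $P_\infty$) lie outside $\mathrm{supp}(D)$, so they play no role below.

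Next, for any $h\in\F_q(x)$ the divisor identity $(h\omega)=(h)+(\omega)$ gives the equivalence
\[
h\omega\in\Omega(G-D)\;\Longleftrightarrow\;(h)+(\omega)\ge G-D\;\Longleftrightarrow\;h\in\mathcal{L}(D-G+(\omega)),
\]
so $h\mapsto h\omega$ is an $\F_q$-linear bijection $\mathcal{L}(D-G+(\omega))\xrightarrow{\sim}\Omega(G-D)$.

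Finally, I would check that this bijection intertwines the evaluation map defining $C_{\cal L}(D,D-G+(\omega))$ with the residue map defining $C_\Omega(D,G)$. Fix $h\in\mathcal{L}(D-G+(\omega))$. Since $\mathrm{supp}(D)\cap\mathrm{supp}(G)=\emptyset$, the divisor $D-G+(\omega)$ has valuation $1-0+(-1)=0$ at $P_i$, so $v_{P_i}(h)\ge 0$. Writing $h=h(P_i)+O(t_i)$ and $\omega=(t_i^{-1}+c_0+c_1 t_i+\cdots)\,dt_i$ in a uniformizer $t_i$ at $P_i$, the coefficient of $t_i^{-1}\,dt_i$ in $h\omega$ is exactly $h(P_i)\cdot\mathrm{Res}_{P_i}(\omega)=h(P_i)$, so $\mathrm{Res}_{P_i}(h\omega)=h(P_i)$. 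Combining this with the bijection of the previous step yields the desired equality $C_{\cal L}(D,D-G+(\omega))=C_\Omega(D,G)$.

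The only point requiring any care is producing a single $\omega$ with prescribed simple poles of residue $1$ at all $n$ places $P_i$ simultaneously; the explicit formula above handles this at once in the rational function field, while the rest of the argument is a routine local computation with valuations and residues.
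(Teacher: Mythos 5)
Your argument is correct. Note, however, that the paper does not prove this lemma at all: it is quoted verbatim from Stichtenoth \cite[Proposition 2.2.10]{Stich}, where it is established for arbitrary function fields via the theory of Weil differentials and their local components (and where producing an $\omega$ with $v_{P_i}(\omega)=-1$ and residue $1$ at all $P_i$ simultaneously requires the approximation theorem). What you supply instead is a self-contained, elementary proof valid in the rational function field, which is the only setting the paper uses: you exhibit the explicit differential $\omega=\frac{f'}{f}\,dx$ with $f=\prod_{i=1}^n(x-\alpha_i)$ --- exactly the $\omega_{h'}$ that the paper introduces in the paragraph following the lemma --- verify its local behaviour by partial fractions, transport $\Omega(G-D)$ to ${\cal L}(D-G+(\omega))$ by the divisor identity $(h\omega)=(h)+(\omega)$, and match residues with evaluations by the local computation $\mathrm{Res}_{P_i}(h\omega)=h(P_i)\,\mathrm{Res}_{P_i}(\omega)$, which is legitimate since $v_{P_i}(h)\ge 0$ for $h\in{\cal L}(D-G+(\omega))$. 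The trade-off is clear: your route buys a concrete, genus-zero proof that doubles as the construction the paper needs anyway, at the cost of generality --- it does not recover Stichtenoth's statement for curves of higher genus, where no single rational function plays the role of $f$ and the existence of $\omega$ is genuinely nontrivial. For the purposes of this paper your argument is a complete and correct substitute for the citation.
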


For simplicity, we let, in the sequel, 
$ h(x)=\prod\limits_{i=1}^n(x-\alpha_i)$ and $h'(x)=\frac{dh}{dx}, \text{ the derivative of }h$ with respect to $x$. Then 
$$h'(x)=\sum\limits_{i=1}^n\prod\limits_{j=1,j\not= i}^n(x-\alpha_j).$$
A simple calculation gives $$\omega_{h'}:=\frac{h'}{h}dx=\left(\frac{1}{x-\alpha_1}+\cdots+\frac{1}{x-\alpha_n}\right)dx, $$ and hence, the divisor $(\omega_{h'})$ of $\omega_{h'}$ 
$$(\omega_{h'})=(h')-D+(n-2)P_{\infty}$$
satisfies
$$
v_{P_i}(\omega_{h'})=-1,
Res_{P_{i}}(\omega_{h'})=1,\forall 1\le i\le n .
$$

\section{Characterization of one dimensional hull}\label{section:characterization}
The following lemma gives sufficient conditions for two algebraic geometry codes to have one dimensional intersection subcode.
\begin{lem}\label{lem:characterization}
 \text{With the same notation as above, assume that}
\begin{enumerate}[(i)]
\item $A,B,G$ are divisors such that $A\ge 0,B\ge 0$ and $supp(A),supp(B),supp(D)$ and $supp(G)$ are pairwise disjoint,
\item $\deg G<n$,
\item $ G-A-B$ is a principal divisor.
\end{enumerate}
Then $C_{\cal L}(D,G-A+(z))\cap C_{\cal L}(D,G-B+(z))$ is a one-dimensional code for any $z\in \F_q(x)$ satisfying $v_{P_i}(z)=0$ for all $1\le i\le n.$
\end{lem}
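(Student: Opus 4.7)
The plan is to identify the intersection $C_{\cal L}(D,G_1)\cap C_{\cal L}(D,G_2)$, where $G_1:=G-A+(z)$ and $G_2:=G-B+(z)$, with the evaluation of a single Riemann-Roch space $\mathcal{L}(\min(G_1,G_2))$, and then invoke Lemma \ref{prop:dim-principal} to pin down its dimension.

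First I would verify that the evaluation map $\mathrm{ev}_D\colon\mathcal{L}(G_j)\to\F_q^n$ is injective for $j=1,2$. The assumption $v_{P_i}(z)=0$ for all $i$ together with the pairwise disjointness of $supp(G), supp(A), supp(B), supp(D)$ guarantees $supp(G_j)\cap supp(D)=\emptyset$, while $\deg G_j=\deg G-\deg A_j\le \deg G<n$ (with $A_1:=A$, $A_2:=B$) forces $\mathcal{L}(G_j-D)=\{0\}$, hence injectivity.

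Next, take any codeword $c$ in the intersection and pick $f\in\mathcal{L}(G_1)$ and $g\in\mathcal{L}(G_2)$ evaluating to $c$. Because $A,B\ge 0$ have disjoint supports, $\min(A,B)=0$ pointwise, so $\max(G_1,G_2)=G+(z)$ has degree $\deg G<n$. The difference $f-g$ belongs to $\mathcal{L}(\max(G_1,G_2))$ and vanishes on $D$, hence $f-g\in\mathcal{L}(\max(G_1,G_2)-D)=\{0\}$. Therefore the intersection equals $\mathrm{ev}_D\bigl(\mathcal{L}(G_1)\cap\mathcal{L}(G_2)\bigr)=\mathrm{ev}_D\bigl(\mathcal{L}(\min(G_1,G_2))\bigr)$, and the pointwise computation (using $\max(A,B)=A+B$, again from effectivity and disjoint supports) yields $\min(G_1,G_2)=G-A-B+(z)$.

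Finally, hypothesis (iii) says $G-A-B=(w)$ for some $w\in\F_q(x)^*$, so $\min(G_1,G_2)=(wz)$ is a principal divisor of degree zero. Lemma \ref{prop:dim-principal} then gives $\ell(\min(G_1,G_2))=1$, and since this divisor has degree $0<n$ and support disjoint from $D$, $\mathrm{ev}_D$ is injective on $\mathcal{L}(\min(G_1,G_2))$. Consequently the intersection has dimension exactly one. The only real subtlety is the pointwise divisor bookkeeping with $\min$ and $\max$, where the effectivity and disjoint-support hypotheses on $A$ and $B$ are indispensable; once that is handled, the conclusion is immediate from the principal-divisor characterization of $\ell(\cdot)=1$.
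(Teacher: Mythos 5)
Your proof is correct and follows essentially the same route as the paper: show two representatives of a common codeword differ by an element of ${\cal L}(\cdot - D)=\{0\}$ (using $\deg G<n$), conclude the intersection is the image of ${\cal L}(G-A-B+(z))$, and apply Lemma \ref{prop:dim-principal} to the principal degree-zero divisor. Your $\min/\max$ divisor bookkeeping is just a reformulation of the paper's step of multiplying by $z$ and using effectivity and disjointness of $A,B$, though you are somewhat more explicit than the paper about injectivity of the evaluation map and hence about the dimension being exactly one.
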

\begin{proof} Let $z\in \F_q(x)$ satisfying $v_{P_i}(z)=0$ for all $1\le i\le n.$ Let $c\in C_{\cal L}(D,G-A+(z))\cap C_{\cal L}(D,G-B+(z)).$ Then 
$c=(f_1(P_1)\hdots,f_1(P_n))=(f_2(P_1)\hdots,f_2(P_n))$ for some $f_1\in {\cal L}(G-A+(z)),f_2\in {\cal L}(G-B+(z))$ with $(f_1)+G-A+(z)\ge 0$ and $(f_2)+G-B+(z)\ge 0$. Since $A,B$ are positive divisors, $(f_1z)+G\ge 0$ and $(f_2z)+G\ge 0$. Thus $f_1z-f_2z\in {\cal L}(G)$ and $(f_1z-f_2z)+G\ge 0.$ Since $(P_i)_{1\le i\le n}$ are zeros of $f_1-f_2$ but not of $z$ and $supp(G)\cap supp(D)$ are disjoint, we get $(f_1z-f_2z)+G-D\ge 0.$ Hence $f_1z-f_2z\in {\cal L}(G-D)=\{0\}.$ We now have $f_1=f_2\in {\cal L}(G-A)\cap {\cal L}(G-B)={\cal L}(G-A-B)$, and by Lemma \ref{prop:dim-principal}, $\ell (G-A-B)=1$. We conclude that $\dim (C_{\cal L}(D,G-A+(z))\cap C_{\cal L}(D,G-B+(z)))=1$.
\end{proof}
The following lemma gives sufficient conditions for an algebraic geometry code to have one dimensional hull.
\begin{lem}\label{lem:generalized}
 \text{With the same notation as above, assume that}
\begin{enumerate}[(i)]
\item $A,B,G$ are divisors such that $A\ge 0,B\ge 0$ and $supp(A),supp(B),supp(D)$ and $supp(G)$ are pairwise disjoint,
\item $\deg G<n$,
\item $ G-A-B$ is a principal divisor,
\item the following condition
\begin{equation}\label{eq:char2}2G-A-B-(h')-(n-2)P_{\infty}=(y)
\end{equation} holds for some rational function $y$ such that $y(P_i),\forall 1\le i \le n,$ are squares in $\F_q^*.$
\end{enumerate}
Then $C_{\cal L}(D,G-A+(z))$ is a $1$-$d$-hull code for any $z\in \F_q(x)$ satisfying $(z^2y)(P_i)=1$ for all $1\le i\le n.$
\end{lem}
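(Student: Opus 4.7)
The plan is to reduce the statement to a computation analogous to Lemma \ref{lem:characterization}, with condition (iv) introduced precisely to convert the orthogonality condition into a matching of evaluation divisors.

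First I would compute the dual $C_{\cal L}(D,G-A+(z))^\perp$ explicitly. By Lemma \ref{lem:01} it equals $C_\Omega(D,G-A+(z))$, and applying Lemma \ref{lem:1} with the distinguished differential $\omega_{h'}=(h'/h)dx$ (whose divisor $(\omega_{h'})=(h')-D+(n-2)P_\infty$ has $v_{P_i}=-1$ and $Res_{P_i}=1$) rewrites it as
$$C_{\cal L}\!\left(D,\; -G+A-(z)+(h')+(n-2)P_\infty\right).$$
Now invoke condition (iv), i.e.\ $(h')+(n-2)P_\infty=2G-A-B-(y)$, to collapse the dual to
$$C_{\cal L}(D,G-A+(z))^\perp \;=\; C_{\cal L}\!\left(D,\; G-B-(z)-(y)\right).$$

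Second, I would follow the argument of Lemma \ref{lem:characterization} but with the square trick in place of a direct equality of twists. Take $c$ in the hull, write $c=(f_1(P_i))=(f_2(P_i))$ with $f_1\in{\cal L}(G-A+(z))$ and $f_2\in{\cal L}(G-B-(z)-(y))$. Then $f_1 z\in{\cal L}(G)$ and $f_2/(zy)\in{\cal L}(G)$, because $A,B\ge 0$. The key calculation is, at each $P_i$, using $f_1(P_i)=f_2(P_i)$ and the hypothesis $(z^2y)(P_i)=1$:
$$(f_1z)(P_i)-\frac{f_2}{zy}(P_i)=f_1(P_i)\left(z(P_i)-\frac{1}{z(P_i)y(P_i)}\right)=f_1(P_i)\cdot\frac{z(P_i)^2y(P_i)-1}{z(P_i)y(P_i)}=0.$$
Hence $f_1z-f_2/(zy)\in{\cal L}(G-D)=\{0\}$ since $\deg G<n$, so $f_2=f_1z^2y$. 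Plugging this back into the membership condition for $f_2$ yields $f_1\in{\cal L}(G-B+(z))$, hence
$$f_1\in{\cal L}(G-A+(z))\cap{\cal L}(G-B+(z))={\cal L}(G-A-B+(z)),$$
where the last equality uses that $supp(A)$ and $supp(B)$ are disjoint. By (iii), $G-A-B$ is principal of degree $0$, so $G-A-B+(z)$ is also principal, and Lemma \ref{prop:dim-principal} gives $\ell(G-A-B+(z))=1$. The evaluation map is injective on this space because ${\cal L}(G-A-B+(z)-D)=\{0\}$ (degree $-n<0$), so the hull has dimension at most $1$.

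Finally I would show the hull is exactly one-dimensional. Pick any nonzero $f_1\in{\cal L}(G-A-B+(z))$ and put $f_2:=f_1z^2y$. Then $f_1\in{\cal L}(G-A+(z))$, and a direct divisor check shows $f_2\in{\cal L}(G-B-(z)-(y))$, and the codewords coincide coordinate-wise thanks to $(z^2y)(P_i)=1$; nonvanishing of the codeword follows from the same degree count as above. This produces a nonzero element of the hull, completing the equality.

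The main obstacle is the evaluation identity in the second paragraph: one must notice that condition (iv) with the \emph{squares} assumption is calibrated exactly so that the two candidate preimages $f_1z$ and $f_2/(zy)$ inside ${\cal L}(G)$ coincide at every rational point $P_i$, which is what forces them to be globally equal via ${\cal L}(G-D)=\{0\}$. Once this cancellation is recognized, the remaining steps are formal manipulations of divisors and a routine application of Riemann--Roch through Lemma \ref{prop:dim-principal}.
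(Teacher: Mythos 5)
Your proof is correct and follows essentially the same route as the paper: the dual of $C_{\cal L}(D,G-A+(z))$ is computed via $\omega_{h'}$ and condition (iv) to be $C_{\cal L}(D,G-B-(z)-(y))$ exactly as the paper does, and the hypothesis $(z^2y)(P_i)=1$ plays the same neutralizing role, the only difference being that you rerun the intersection argument of Lemma~\ref{lem:characterization} directly inside ${\cal L}(G-B-(z)-(y))$ instead of first identifying the dual code with $C_{\cal L}(D,G-B+(z))$ and citing that lemma. A small bonus of your version is the explicit nonzero hull element built from $f_2=f_1z^2y$, which makes the lower bound $\dim\ge 1$ explicit rather than implicit in the appeal to Lemma~\ref{lem:characterization}.
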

\begin{proof} Under Condition $(i)$ and the assumption $(z^2y)(P_i)=1$, the code $C_{\cal L}(D,G-A+(z))$ is well defined. From the fact that $C_{\cal L}^\perp (D,G-A+(z))=C_\Omega (D,G-A+(z))=C_{\cal L} (D,D-G+A-(z)+(\omega_{h'}))=C_{\cal L} (D,D-G+A-(z)+(h')-D+(n-2)P_\infty)=C_{\cal L} (D,-G+A-(z)+(h')+(n-2)P_\infty).$ The condition iii) implies that $G-B-(y)-(z)=-G+A+(h')+(n-2)P_{\infty}$. Thus $C_{\cal L}^\perp (D,G-A+(z))=C_{\cal L} (D,G-B-(y)-(z))=C_{\cal L}(G-B+(z)-(z^2y))$

From Lemma \ref{prop:stich2}, ${\cal L}(G-B+(z))$ and $ {\cal L}(G-B+(z)-(z^2y))$ are isomorphic vector spaces. Define $$\phi:C_{\cal L}(D,G-B+(z)-(z^2y))\rightarrow C_{\cal L}(D,G-B+(z))$$ such that $\phi(f(P_1),\hdots,f(P_n))=(\frac{f}{z^2y}(P_1),\hdots,\frac{f}{z^2y}(P_n)).$
Under the condition $(z^2y)(P_i)=1$ for $1\le i\le n$,  we get $\phi(f(P_1),\hdots,f(P_n))=({f}(P_1),\hdots,{f}(P_n))$, and thus
$C_{\cal L}(D,G-B+(z)-(z^2y))=C_{\cal L}(D,G-B+(z))=C_{\cal L}^\perp (D,G-A+(z))$. The rest follows from Lemma \ref{lem:characterization}.
\end{proof}

\section{Construction of one dimensional hull}\label{section:constructions}

By fixing the divisor $G$, we get the following result.
\begin{thm} Set $G=(n-2)P_\infty$. Let $a,b$ be in $\F_q[x]$ such that $\gcd (a,b)=1$. With the same notation as above, assume that 
\begin{enumerate}[(i)]
\item $supp((a)_0)\cap supp(D)=supp((b)_0)\cap supp(D)=\emptyset$,
\item $\deg a+\deg b=n-2,$
\item $(abh')(P_i),1\le i\le n,$ are nonzero squares in $\F_q.$
\end{enumerate}

Then $C_{\cal L}(D,(n-2)P_\infty-(a)_0+(z))$ is a $1$-$d$-hull code for any $z\in \F_q(x)$ satisfying $z(P_i)^2=(abh')(P_i)$ for all $1\le i\le n.$
\end{thm}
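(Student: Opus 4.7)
The plan is to obtain this theorem as a direct specialization of Lemma \ref{lem:generalized} with the choices $A=(a)_0$, $B=(b)_0$, $G=(n-2)P_\infty$, and $y=1/(abh')$. Once these substitutions are made, the three numbered hypotheses of the theorem line up with the four hypotheses of the lemma, and the conclusion follows verbatim.

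First I would check hypotheses (i) and (ii) of Lemma \ref{lem:generalized}. The divisors $(a)_0$ and $(b)_0$ are effective by definition, and pairwise disjointness of their supports with each other and with $\mathrm{supp}(D)$ follows from $\gcd(a,b)=1$ together with the theorem's assumption (i); $\mathrm{supp}(G)=\{P_\infty\}$ is disjoint from everything since $D$, $(a)_0$, and $(b)_0$ live at finite places. The degree bound $\deg G = n-2 < n$ is immediate.

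Next I would verify hypothesis (iii) of the lemma, namely that $G-A-B$ is principal. Writing $(ab)$ for the principal divisor of the polynomial $ab\in\F_q[x]$, and using $\deg(ab)=\deg a+\deg b = n-2$ together with $\gcd(a,b)=1$, one gets $(ab)=(a)_0+(b)_0-(n-2)P_\infty$, so $G-A-B=-(ab)$ is principal.

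The key computation is hypothesis (iv). Using $\deg h = n$, one has $(h')=(h')_0-(n-1)P_\infty$, so
\begin{equation*}
2G-A-B-(h')-(n-2)P_\infty = (2n-3)P_\infty - (a)_0 - (b)_0 - (h')_0,
\end{equation*}
which is exactly the principal divisor of $y:=1/(abh')$ since $\deg(abh')=2n-3$. Under the theorem's assumption (iii), $(abh')(P_i)$ is a nonzero square in $\F_q^*$ for every $1\le i\le n$, hence so is its inverse $y(P_i)$. Finally, the condition $(z^2y)(P_i)=1$ required by the lemma reads $z(P_i)^2=(abh')(P_i)$, matching the hypothesis placed on $z$ in the theorem. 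Applying Lemma \ref{lem:generalized} then yields that $C_{\cal L}(D,(n-2)P_\infty-(a)_0+(z))$ is a $1$-$d$-hull code.

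The only real obstacle is careful bookkeeping of signs and pole orders at $P_\infty$: one must track how the degrees of $a$, $b$, and $h'$ translate into multiplicities at $P_\infty$ so that the identity $2G-A-B-(h')-(n-2)P_\infty = -(abh')$ comes out cleanly. Everything else is a substitution.
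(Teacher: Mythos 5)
Your proposal is correct and is essentially the paper's own proof: both reduce the theorem to Lemma \ref{lem:generalized} with $A=(a)_0$, $B=(b)_0$, $G=(n-2)P_\infty$ and $y=1/(abh')$, checking that $G-A-B=-(ab)$ is principal, that $2G-A-B-(h')-(n-2)P_\infty$ is the principal divisor of $1/(abh')$, and that the square condition on $(abh')(P_i)$ passes to $y(P_i)$ while $(z^2y)(P_i)=1$ becomes $z(P_i)^2=(abh')(P_i)$. One small correction to your bookkeeping: you assume $\deg h'=n-1$ (hence the coefficient $2n-3$ at $P_\infty$ and $\deg(abh')=2n-3$), which fails when $p\mid n$ --- a case the theorem is actually applied to later, e.g. $h(x)=x^N-x$ with $h'=-1$ behind Corollary \ref{cor:square h}(1); the identity you need still holds because $(h')=(h')_0-\deg(h')\,P_\infty$ and the same $\deg(h')$ appears on both sides, so replacing $n-1$ by $\deg h'$ throughout repairs the argument with no other change.
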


\begin{proof}
Under the above assumption, we have $2G-A-B-(h')-(n-2)P_\infty=G-A-B-(h')=G-(a)_0-(b)_0-(h')=\frac{1}{(abh')},$ where $A=(a)_0,B=(b)_0$, and $G-A-B=\frac{1}{(ab)}$ is obviously a principal divisor.
\end{proof}

\begin{rem}If $C$ is an $[n,1]$ code, then $C$ is a $1$-$d$-hull code if and only if C is self-orthogonal ($C\subset C^\perp$).
\end{rem}

From now on, we only consider $1$-$d$-hull code with parameters $[n,k]$, where $k>1.$

\begin{cor} \label{cor:even q}Assume that $q>4$ is even, $n\le q-2.$ Then there exists a $1$-$d$-hull MDS code with parameters $[n,n-s-1]$ for any $1\le s\le n-3.$
\end{cor}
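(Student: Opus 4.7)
The plan is to apply the preceding theorem with $G=(n-2)P_\infty$, choosing appropriate polynomials $a,b$ and a rational function $z$. To obtain an $[n,n-s-1]$ code I want $\deg a=s$: then $\deg((n-2)P_\infty-(a)_0+(z))=n-2-s$, and since the curve has genus zero the formula of Theorem \ref{thm:distance} gives dimension $n-s-1$. The equality $\deg a+\deg b=n-2$ then forces $\deg b=n-2-s$, and the range $1\le s\le n-3$ guarantees that both $a$ and $b$ are nonconstant.

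The key observation is that in characteristic $2$ the Frobenius $x\mapsto x^2$ is a bijection on $\F_q$, so every nonzero element of $\F_q$ is a nonzero square. Consequently, condition (iii) of the theorem reduces to requiring $(abh')(\alpha_i)\ne 0$ for each $i$. I would take $D=P_{\alpha_1}+\cdots+P_{\alpha_n}$ for distinct $\alpha_1,\dots,\alpha_n\in\F_q$; because $n\le q-2$, the complement $\F_q\setminus\{\alpha_1,\dots,\alpha_n\}$ contains at least two elements, so I can pick distinct $\gamma,\delta$ in it and set $a(x)=(x-\gamma)^s$ and $b(x)=(x-\delta)^{n-2-s}$. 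Then $\gcd(a,b)=1$, $supp((a)_0)$ and $supp((b)_0)$ are disjoint from $supp(D)$, and $(ab)(\alpha_i)\ne 0$; combined with $h'(\alpha_i)=\prod_{j\ne i}(\alpha_i-\alpha_j)\ne 0$, all of conditions (i)--(iii) of the theorem are in force.

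It remains to produce $z$. For each $i$ I let $z_i\in\F_q^*$ be the unique element with $z_i^2=(abh')(\alpha_i)$, which exists by the characteristic-$2$ observation above. Lagrange interpolation at the distinct points $\alpha_1,\dots,\alpha_n$ yields $z(x)\in\F_q[x]$ of degree less than $n$ with $z(\alpha_i)=z_i$; in particular $v_{P_i}(z)=0$ and $z(\alpha_i)^2=(abh')(\alpha_i)$ for all $i$. The theorem then delivers a $1$-$d$-hull MDS code with parameters $[n,n-s-1]$. I do not expect any substantive obstacle here: the corollary is essentially a streamlined application of the theorem in which the square condition becomes vacuous in even characteristic, and the length bound $n\le q-2$ is precisely what makes available the two free spots $\gamma,\delta$ required by the construction.
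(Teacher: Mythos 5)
Your proposal is correct and follows essentially the same route as the paper: choose $\alpha,\beta$ (your $\gamma,\delta$) outside $supp(D)$ using $n\le q-2$, set $a(x)=(x-\alpha)^{s}$, $b(x)=(x-\beta)^{n-2-s}$, and note that in characteristic $2$ every element of $\F_q$ is a square so condition (iii) of the theorem holds automatically. Your extra details (the Lagrange interpolation producing $z$ and the genus-zero dimension count giving $[n,n-s-1]$) only make explicit what the paper leaves implicit.
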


\begin{proof}Let $\alpha,\beta\in \F_q$ such that $\alpha\not=\beta$ and the corresponding places $P_\alpha,P_\beta\notin supp(D).$ 
Set $a(x)=(x-\alpha)^{s},b(x)=(x-\beta)^{n-2-s}$. Since any element in $\F_q$ is a square, we get that $((abh')(\alpha_i))_{1\le i\le n}$ are nonzero square elements in $\F_q.$
\end{proof}

\begin{exam} For $q=2^3,n=4,5,6,$ and $s=1,$ using Magma \cite{Mag}, we give three $1$-$d$-hull MDS codes $C_1,C_2,C_3$ with parameters $[4,2,3],[5,3,3]$ and $[6,4,3]$, respectively as follows.

$$
C_1=\left(
\begin{array}{cccc}
1&0&w^3&w^3\\
0&1&w^4&w^3\\
\end{array}
\right),
C_2=\left(
\begin{array}{ccccc}
1&0&0&w^4&w^6\\
0&1&0&w&w^5\\
0&0&1&w^3&w^6\\
\end{array}
\right),
C_3=\left(
\begin{array}{cccccc}
1&0&0&0&w^3&w^2\\
0&1&0&0&1&w\\
0&0&1&0&w^6&w^2\\
0&0&0&1&1&w^2\\
\end{array}
\right).
$$

\end{exam}

\begin{cor} Assume that $q>5$ is odd, $n\le q-2$ and $(h'(P_i))_{1\le i\le n}$ are nonzero square elements in $\F_q.$ 
\begin{enumerate}
\item If  $n$ is even, then there exists a $1$-$d$-hull MDS code  with parameters $[n,n-2s-1]$ for any $1\le s\le \frac{n}{2}-2.$
\item If  $n$ is odd, $\frac{n-1}{2}\le q-1$ and $supp ((x)_0)\cap supp (D)=\emptyset$, then there exists a $1$-$d$-hull MDS code  with parameters $[n,n-2s-2]$ for any $0\le s\le \frac{n+1}{2}-3.$
\end{enumerate}
\end{cor}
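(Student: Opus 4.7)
Both parts are to be proved by specializing the preceding theorem with explicit polynomials $a(x), b(x) \in \F_q[x]$ chosen according to the parity of $n$. For Part 1 ($n$ even), I would take distinct $\alpha, \beta \in \F_q \setminus \{\alpha_1,\ldots,\alpha_n\}$ (available since $n \le q-2$) and set $a(x) = (x-\alpha)^{2s}$, $b(x) = (x-\beta)^{n-2-2s}$. Condition (i) of the theorem is then immediate (distinct linear factors outside $D$), (ii) reads $\deg a + \deg b = n-2$, and for (iii) both exponents $2s$ and $n-2-2s$ are even precisely because $n$ is even, so $a(\alpha_i)$ and $b(\alpha_i)$ are squares; combined with the hypothesis that $h'(\alpha_i)$ is a nonzero square, this yields $(abh')(\alpha_i)$ a nonzero square. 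The theorem then produces a $1$-$d$-hull MDS code of dimension $(n-2) - 2s + 1 = n - 2s - 1$, and the range $1 \le s \le n/2 - 2$ keeps $\deg a, \deg b \ge 2$ and $k \ge 3$.

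For Part 2 ($n$ odd), the obstacle is that $\deg a + \deg b = n-2$ is odd, so one of $a, b$ must carry an odd degree and cannot itself be a perfect square. The hypothesis $\mathrm{supp}((x)_0) \cap \mathrm{supp}(D) = \emptyset$, i.e.\ $0 \notin \{\alpha_i\}$, is exactly what allows me to absorb the extra degree into a factor of $x$: I would set
\[
a(x) = x^{2s+1}, \qquad b(x) = (x-\beta)^{n-3-2s},
\]
with $\beta \in \F_q \setminus (\{\alpha_1,\ldots,\alpha_n\}\cup\{0\})$ (available from $n \le q-2$). Then $\gcd(a,b) = 1$, the degree sum is $n-2$, and condition (i) holds since $0, \beta \notin \{\alpha_i\}$. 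Evaluating,
\[
(abh')(\alpha_i) = \alpha_i^{2s+1}(\alpha_i-\beta)^{n-3-2s}h'(\alpha_i),
\]
where the last two factors are squares (the middle because $n-3-2s$ is even when $n$ is odd, and the last by hypothesis); the first factor $\alpha_i^{2s+1}$ is a square iff $\alpha_i$ is. To satisfy (iii) I would therefore select the evaluation set inside the nonzero squares of $\F_q^*$; the quantitative hypothesis $(n-1)/2 \le q-1$ in the statement is what guarantees that, after also imposing the $h'$-constraint, sufficiently many suitable points exist. The theorem then produces a $1$-$d$-hull MDS code of dimension $(n-2) - (2s+1) + 1 = n - 2s - 2$, and the bound $0 \le s \le (n+1)/2 - 3$ keeps $\deg b \ge 2$ and $k \ge 3$.

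The hard part will be Part 2: the odd total degree $n-2$ causes a parity mismatch that breaks the automatic square-value property from Part 1, so one must both introduce an $x$-factor (legitimized by $0 \notin \{\alpha_i\}$) to soak up the odd degree and further restrict the evaluation points to lie in a single quadratic coset of $\F_q^*$ in order to preserve squareness of every factor appearing in $(abh')(\alpha_i)$.
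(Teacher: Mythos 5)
Your proposal is correct and matches the paper's own argument essentially step for step: Part 1 uses exactly the same choice $a=(x-\alpha)^{2s}$, $b=(x-\beta)^{n-2-2s}$, and Part 2 differs only cosmetically (you take $a=x^{2s+1}$ where the paper takes $a=x(x-\alpha)^{2s}$), with the same key move of absorbing the odd degree into an $x$-factor and restricting the evaluation points $\alpha_i$ to squares of $\F_q$ so that $(abh')(\alpha_i)$ is a nonzero square. No substantive difference in method or in the gaps left implicit (e.g., the counting of available square evaluation points), so no further comparison is needed.
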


\begin{proof}Let $\alpha,\beta\in \F_q$ such that $\alpha\not=\beta$ and the corresponding places $P_\alpha,P_\beta\notin supp(D).$ 
\begin{enumerate} 
\item Set $a(x)=(x-\alpha)^{2s},b(x)=(x-\beta)^{n-2-2s}$. It is easy to check that $((abh')(\alpha_i))_{1\le i\le n}$ are nonzero square elements in $\F_q.$
\item Set $a(x)=x(x-\alpha)^{2s},b(x)=(x-\beta)^{n-3-2s}$. For $1\le i\le n,$ take $\alpha_i$ from the set of square elements in $\F_q.$
\end{enumerate}
\end{proof}

Since the dual of an MDS code is again an MDS code, we derive the following

\begin{cor} Assume that $q>5$ is odd, $n\le q-2$ and $(h'(P_i))_{1\le i\le n}$ are nonzero square elements in $\F_q.$ 
\begin{enumerate}
\item If  $n$ is even, then there exists a $1$-$d$-hull MDS code  with parameters $[n,2s+1]$ for any $1\le s\le \frac{n}{2}-2.$
\item If  $n$ is odd, $\frac{n-1}{2}\le q-1$ and $supp ((x)_0)\cap supp (D)=\emptyset$, then there exists a $1$-$d$-hull MDS code  with parameters $[n,2s+2]$ for any $0\le s\le \frac{n+1}{2}-3.$
\end{enumerate}
\end{cor}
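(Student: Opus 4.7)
The plan is to derive this corollary as an immediate consequence of the previous corollary by dualizing. The two key facts I will use are (a) the dual of an MDS code is again MDS, and (b) the hull dimension is preserved under dualization, since $\mathrm{hull}(C)=C\cap C^\perp = C^\perp\cap (C^\perp)^\perp = \mathrm{hull}(C^\perp)$, so the dimension of $\mathrm{hull}(C)$ equals that of $\mathrm{hull}(C^\perp)$. Consequently, if $C$ is a $1$-$d$-hull MDS $[n,k]$ code, then $C^\perp$ is a $1$-$d$-hull MDS $[n,n-k]$ code.

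First I would invoke the previous corollary to produce, under the hypotheses of the present statement, a $1$-$d$-hull MDS code with parameters $[n,n-2s-1]$ in the even case (for $1\le s\le \frac{n}{2}-2$) and $[n,n-2s-2]$ in the odd case (for $0\le s\le \frac{n+1}{2}-3$, assuming also $\frac{n-1}{2}\le q-1$ and $\mathrm{supp}((x)_0)\cap \mathrm{supp}(D)=\emptyset$). Then I would take the Euclidean dual of each such code. By the two facts above, the resulting codes are $1$-$d$-hull MDS codes with parameters $[n,n-(n-2s-1)]=[n,2s+1]$ and $[n,n-(n-2s-2)]=[n,2s+2]$, respectively, over the same ranges of $s$. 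This directly yields both items.

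The only point requiring a brief verification is fact (a), the MDS-ness of the dual, but this is standard (e.g., it follows from the Singleton defect being invariant under duality for MDS codes, or from the MacWilliams identities). Fact (b) is immediate from the definition of the hull. Since the argument is a pure pushforward through duality, there is no genuine obstacle to overcome here; the corollary is essentially a one-line deduction from its predecessor, and the explicit construction — built from divisors $A=(a)_0,B=(b)_0$ with $a,b$ as in the proof of the previous corollary — need not be reproduced because dualization is performed at the level of codes and preserves both MDS-ness and hull dimension.
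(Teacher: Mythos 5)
Your proposal is correct and is essentially identical to the paper's own derivation: the paper obtains this corollary from the preceding one precisely via the remark ``Since the dual of an MDS code is again an MDS code,'' i.e.\ by dualizing the $[n,n-2s-1]$ and $[n,n-2s-2]$ codes, with the hull-preservation fact $hull(C^\perp)=C^\perp\cap (C^\perp)^\perp=hull(C)$ left implicit. Your explicit verification of that fact only makes the same one-line argument slightly more complete.
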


\begin{cor}\label{cor:square h}Let $q=p^m>5$ be odd and $ N\le q-2.$ Set
$$
K=
\begin{cases}
N-2s-1\text{ if $N$ is even and }1\le s\le \frac{N}{2}-2, \\
N-2s-2\text{ if $N$ is odd and }0 \le s\le \frac{N+1}{2}-3.\\
\end{cases}
$$ 
Then there exist $1$-$d$-hull MDS codes with parameters $[N,K]$ and $[N,N-K]$ with the following conditions:

\begin{enumerate}[(1)]
\item $p|N$, $(N-1)|(q-1),$ $N$ even, 
\item $q$  square, $(N-1)|(q-1)$, $N$ even, 

\item $ N|\frac{(q-1)}{2}$, 
\begin{enumerate}
\item $N$  even,
\item $q\equiv 1\pmod 4$, $N$ odd,
\end{enumerate}


\item $1\le r< m$, $\frac{n(p^r+1)}{2(p^r-1)}$ odd, $N=(t+1)n$, $n=\frac{q-1}{p^r+1}$ a square, $t$ odd, $1\le t\le \lfloor \frac{q-2}{n} \rfloor -1,$
\item $1\le r< m$, $\frac{n(p^r+1)}{2(p^r-1)}$ even, $N=(t+1)n$, $n=\frac{q-1}{p^r+1}$ a square, $1\le t\le \lfloor \frac{q-2}{n} \rfloor -1,$
\item $1\le r< m$, $N=(t+1)n$, $n=\frac{q-1}{p^r-1}$ a square, $r|\frac{m}{2}$, $1\le t\le p^r-2,$
\item $m=2m_0,q_0=p^{m_0}$, $t$ even, $1\le t< q_0$, $N=q_0t,$
\item  $m=2m_0,q_0=p^{m_0}$, $t$ odd, $1\le t< q_0$ and $1\le t\le q_0-1$ for $q=9$, $N=q_0t+1,$

\item $m$ even, $r=p^{m_0},{m_0}|\frac{m}{2},N=2tr^\ell,0\le \ell<m/{m_0},1\le t\le \min((r-1)/2 ,\lfloor \frac{q-2}{2r^\ell} \rfloor),$\\
\item $q\equiv 1\pmod 4, N=2p^\ell,0< \ell <m,$\\
\item $m$ even, $r=p^{m_0},{m_0}|\frac{m}{2},N=(2t+1)r^\ell+1,0\le \ell <m/{m_0},0 \le t \le \min((r-1)/2 ,\lfloor \frac{q-2-r^\ell}{2r^\ell} \rfloor),$\\
\item $q\equiv 1 \pmod 4 ,N=p^\ell+1,0< \ell< m.$\\
\end{enumerate}

\end{cor}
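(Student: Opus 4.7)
The plan is to reduce this corollary to the preceding one: for each of the twelve configurations, I would exhibit an evaluation set $\{\alpha_1,\ldots,\alpha_N\}\subset\F_q$ of the prescribed cardinality $N$ such that every $h'(\alpha_i)=\prod_{j\ne i}(\alpha_i-\alpha_j)$ is a nonzero square of $\F_q^*$, where $h(x)=\prod_i(x-\alpha_i)$. Granted this, the preceding corollary yields the $[N,K]$ $1$-$d$-hull MDS code; the dual then provides the $[N,N-K]$ code, since $\mathrm{hull}(C^\perp)=\mathrm{hull}(C)$ and the dual of an MDS code is MDS. The two auxiliary points $\alpha,\beta\in\F_q$ with $P_\alpha,P_\beta\notin\mathrm{supp}(D)$ required by the previous corollary must still be available, which the bound $N\le q-2$ guarantees in every case.

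For each configuration the evaluation set is taken with enough algebraic structure that $h$ is sparse and $h'$ admits a closed form at each $\alpha_i$. Three templates cover the list: (a) \emph{multiplicative coset sets} $\{\alpha_i\}=\{0\}\cup\mu_{N-1}$, or $\mu_N$, or $\bigcup_{j=0}^{t}c_j H$, where $H=\mu_n$ is the cyclic subgroup of $\F_q^*$ of order $n$ and the $c_j$ are chosen so that the $c_j^n$ have a prescribed square character, giving $h(x)$ as a product of binomials $x^n-c_j^n$; (b) \emph{subfield sets} $\{\alpha_i\}\subseteq\F_{q_0}$ where $q=q_0^2$ and $q_0=p^{m_0}$, so that $h'(\alpha_i)\in\F_{q_0}^*$ is automatically a square in $\F_q^*$; (c) \emph{additive coset sets} built from an $\F_r$-linear subspace of $\F_q$ with $r=p^{m_0}$, $m_0\mid m/2$, where $h$ factors into $\F_r$-linearized polynomials whose derivatives are computable. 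The elementary identity $h'(\alpha)=f'(\alpha)g(\alpha)$ when $h=fg$ and $f(\alpha)=0$ reduces each verification to a short explicit product. In cases (1)--(3), template (a) reduces squareness to a constant like $\pm1$, $\pm N$, or $\pm(N-1)$; the arithmetic hypotheses ($p\mid N$, $q$ a square, $N\mid(q-1)/2$ together with the congruences on $q\bmod 4$ and the parity of $N$) are calibrated so that these constants lie in $(\F_q^*)^2$. Cases (4)--(6) use a coset union $\bigcup c_j H$; the parity conditions on $t$ and on $n(p^r+1)/(2(p^r-1))$ are exactly those needed to make the products $\prod_{k\ne j}(c_j^n-c_k^n)\cdot n c_j^{n-1}$ squares. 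Cases (7)--(12) invoke templates (b) and (c); the subfield remark collapses squareness to a computation inside $\F_{q_0}$ which is either automatic or further reduced to an $\F_r$-linearized derivative $\prod_{v\in V\setminus\{0\}}(-v)$, whose value lies in a subfield.

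The main obstacle is not any single step but the combinatorial bookkeeping: twelve parallel verifications must be assembled, each specifying the evaluation set, factoring $h$, computing $h'(\alpha_i)$ in closed form, and confirming the result is a square in $\F_q^*$ via exactly the numerical hypothesis listed. The delicate parts are (i) choosing the scalars $c_j$ and coset representatives so that each $c_j^n-c_k^n$ avoids nonsquare contributions, (ii) matching each case's parity and divisibility constraints ($t$ odd, $r\mid m/2$, the parity of $n(p^r+1)/(2(p^r-1))$, etc.) to the square character of an explicit product, and (iii) verifying in the additive cases that the $\F_q/\F_r$ quotient yields the correct number and structure of nonzero differences. I would present the proof as twelve compact paragraphs, one per sub-case, each following the uniform template: specify $\{\alpha_i\}$, factor $h$, compute $h'(\alpha_i)$, and conclude squareness from the stated hypothesis, finally invoking the preceding corollary and duality to obtain both $[N,K]$ and $[N,N-K]$.
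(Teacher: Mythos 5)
Your overall strategy is the same as the paper's: reduce everything to the preceding corollary by exhibiting, for each of the twelve cases, an evaluation set $U$ with multiplicative-coset, subfield, or additive-coset structure such that every $h'(\alpha)=\prod_{\beta\in U,\beta\ne\alpha}(\alpha-\beta)$ is a nonzero square, then get the $[N,N-K]$ code by dualizing (MDS dual of MDS, hull symmetric). The paper does exactly this, taking $U=\{\alpha:\alpha^N=\alpha\}$ or $\mu_N$ for (1)--(3), unions of multiplicative cosets $U_n\cup\alpha_1U_n\cup\cdots\cup\alpha_tU_n$ for (4)--(6), the sets $\{a_k\beta+a_j\}$ built from $\F_{q_0}$ for (7)--(8), and unions of additive cosets of an $\F_r$-subspace for (9)--(12), and it also trims the ranges of $t$ so that $N\le q-2$.

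The gap is that you never actually carry out (or cite) the twelve squareness verifications, and these are the entire mathematical content of the corollary. Saying the hypotheses ``are calibrated so that these constants lie in $(\F_q^*)^2$'' or that the parity conditions ``are exactly those needed'' is a restatement of what must be proved, not a proof: e.g.\ in case (1) one must compute $h'(\alpha)=N\alpha^{N-1}-1$ on $\{\alpha:\alpha^N=\alpha\}$ and show $p\,|\,N$, $(N-1)\,|\,(q-1)$, $N$ even force it to be a square; in (4)--(6) one must show the specific coset representatives can be chosen so that $n\alpha^{n-1}\prod_{k\ne j}(\alpha_j^n-\alpha_k^n)$ is a square, which is a nontrivial character-sum/counting argument; in (9)--(12) one needs the linearized-polynomial computation of Fang--Fu. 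The paper discharges all of this by invoking \cite[Theorems 2 and 4]{Sok}, \cite[Theorems 3 and 4]{Sok-arxiv}, \cite[Theorem 2]{Yan} and \cite[Theorem 4]{FangFu}; without either reproducing those computations or citing them, your argument is an outline with the hard part missing. A concrete symptom: your template (b), ``take $\{\alpha_i\}\subseteq\F_{q_0}$,'' cannot even produce the lengths $N=q_0t$ with $t\ge 2$ in case (7), since $\F_{q_0}$ has only $q_0$ elements; the actual construction uses $t$ additive cosets $a_k\beta+\F_{q_0}$ and a computation showing $h'(\alpha)$ lands in $\F_{q_0}^*$ up to squares. So the route is right, but as written the proposal does not establish the statement.
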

\begin{proof}Put $$h(x)=\prod\limits_{\alpha\in U}(x-\alpha).$$ For each case, it is enough to prove that, for any $\alpha\in U$, $h'(\alpha)$ is a nonzero square in $\F_q$. Take $U$ as follows.
\begin{itemize}
\item for 1), $U=\{\alpha\in \F_q|\alpha^N=\alpha\},$
\item for 2), $U=U_N=\{\alpha\in \F_q| \alpha^N=\alpha \},$
\item for 3), $U=U_N=\{\alpha\in \F_q| \alpha^N=1\},$
\item for  4)--6), take $U=U_{n}\cup \alpha_1U_{n}\cup \cdots \cup \alpha_tU_n$, $U_{n}=\{\alpha \in \F_q|\alpha^{n}=1\}$ and $\alpha_1,\hdots,\alpha_t\in \F_q\setminus U_{n}$.
\item for 7)--8), label the elements of $\F_{q_0}$ as $a_1,\hdots ,a_{q_0}. $ For some fixed element $\beta\in \F_q\setminus \F_{q_0}, $ take $U=\{a_k\beta+a_j|1\le k, j\le q_0\},$
\item for 9)--12),  label the element of $\F_r$ as $a_0,\hdots,a_{r-1}$, take $H$ as an $\F_r$-subspace and set $H_i=H+a_i \beta$ for some fixed element 
$\beta \in \F_q\setminus \F_r.$ Put $U=H_0\cup \cdots \cup H_{2t-1}$ or $U=H_0\cup \cdots \cup H_{2t}$. 
\end{itemize}

For 1)--3), it can be easily checked that $h'(\alpha)$ is a square for any $\alpha\in U.$ See also \cite[Theorem 2 and Theorem 4]{Sok}.

For 4)--6), it was already checked, in \cite[Theorem 3 and Theorem 4]{Sok-arxiv}, that $h'(\alpha)$ is a square for any $\alpha\in U$. Moreover, in \cite{Sok-arxiv}, $t\le p^r$ for 4)--5) and $t\le p^r-2$ for 6). Since $N=(t+1)n\le q-2,$ we take $t=\min (\lfloor \frac{q-2}{n} \rfloor -1,p^r)=\lfloor \frac{q-2}{n} \rfloor -1$ for 4)--5) and $t=\min (\lfloor \frac{q-2}{n} \rfloor -1,p^r-2)=p^r-2$ for 6).

For 7)--8), it was proved in \cite[Theorem 2]{Yan} that $h'(\alpha)$ is a square for any $\alpha \in U.$

For 9)--12),  it was proved in \cite[Theorem 4]{FangFu} that $h'(\alpha)$ is a square for any $\alpha \in U.$ Moreover, in \cite[Theorem 4]{FangFu}, $t\le (r-1)/2$ for 9). Since $N=2tr^\ell \le q-2,$ we take $\min((r-1)/2 ,\lfloor \frac{q-2}{2r^\ell} \rfloor).$ The range of $t$ and $\ell$ for 10)--12) follow from \cite[Theorem 4]{FangFu} with similar reasoning as 9).
\end{proof}

\begin{exam} From Corollary \ref{cor:square h} $(3)$ $a)$, for $q=3^4,n=8,s=1,$ we give a $1$-$d$-hull MDS code $C_5$ with parameters $[8,5,4]$ as follows.
$$
C_5=\left(
\begin{array}{cccccccc}
 1&0&0&0&0 &w^{49}& w^{57} &w^{35}\\
 0&1&0&0&0& w^{79}& w^{77}& w^{75}\\
 0&0&1&0&0& w^{24} &w^{52}&2\\
 0&0&0&1&0& w^{28}& w^{46} &w^{64}\\
 0&0&0&0&1& w^{34}& w^{72}&1\\
\end{array}
\right).
$$
\end{exam}

\begin{exam} From Corollary \ref{cor:square h} $(3)$ $b)$, for $q=19,n=9,s=1,$ we give a $1$-$d$-hull MDS code $C_4$ with parameters $[9,5,5]$ as follows.
$$
C_4=\left(
\begin{array}{ccccccccc}
 1&0&0&0&0&2 &13 &10&5\\
 0&1&0&0&0& 15&8&1&4\\
 0&0&1&0&0&1&7&6&8\\
 0&0&0&1&0& 16&2&7&8\\
 0&0&0&0&1&5& 12& 12&5\\
\end{array}
\right).
$$
\end{exam}


However when $(h'(P_i)_{1\le i\le n}$ are not all nonzero square elements in $\F_q$, we have the following.
\begin{thm} \label{thm:$1$-d-generalized} Let $q>5$ be an odd prime power. With the same notation as above, assume that $(h')_0=2(f)_0+(g_1)_0+(g_2)_0$ for some $f,g_1,g_2\in \F_q[x]$ such that $supp((g_1)_0) \cap supp((g_2)_0)=\emptyset.$ Assume further that $n+\deg (g_1)+\deg (g_2)\le q-1.$ 
Then for any $1\le s\le \lfloor \frac{(n-\deg (g_1)-\deg (g_2)}{2}\rfloor$, there exists a $1$-$d$-hull MDS code with parameters $[n,n-2s-\deg ((g_1)_0)+1].$
\end{thm}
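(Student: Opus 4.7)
The plan is to invoke the first theorem of this section (the one setting $G=(n-2)P_\infty$) by choosing $a,b$ so as to absorb the square-part factorization of $h'$. From $(h')_0=2(f)_0+(g_1)_0+(g_2)_0$ and the disjointness of $supp((g_1)_0)$ and $supp((g_2)_0)$, I write $h'(x)=c\,f(x)^2\,g_1(x)\,g_2(x)$ with $c\in\F_q^*$; because each $\alpha_i$ is a simple zero of $h$ one has $h'(\alpha_i)\neq 0$, so $supp((f)_0)$, $supp((g_1)_0)$ and $supp((g_2)_0)$ are automatically disjoint from $supp(D)$.

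Using the slack $n+\deg(g_1)+\deg(g_2)\le q-1$, I pick an element $\alpha_0\in\F_q$ and the roots of an auxiliary polynomial $b_0$ of degree $n-2s-\deg(g_1)-\deg(g_2)$, all distinct and outside $supp(D)\cup Z(g_1)\cup Z(g_2)$, and set
\[
a(x)=g_1(x)(x-\alpha_0)^{2s-2},\qquad b(x)=g_2(x)\,b_0(x).
\]
Then $\gcd(a,b)=1$, the supports of $(a)_0,(b)_0$ are disjoint from $supp(D)$, and
\[
\deg a+\deg b=(\deg g_1+2s-2)+(\deg g_2+n-2s-\deg g_1-\deg g_2)=n-2.
\]
A direct computation yields
\[
(abh')(\alpha_i)=c\,f(\alpha_i)^2\,g_1(\alpha_i)^2\,g_2(\alpha_i)^2\,(\alpha_i-\alpha_0)^{2s-2}\,b_0(\alpha_i),
\]
so condition (iii) of the invoked theorem becomes the requirement that $c\,b_0(\alpha_i)$ be a nonzero square for every $i$.

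If $\deg b_0$ is even I take $b_0(x)=\mu(x-\beta_0)^{\deg b_0}$ with $\mu\in\F_q^*$ chosen so that $c\mu$ is a square; if $\deg b_0$ is odd I instead split $b_0$ into distinct linear factors $\prod_j(x-\beta_j)$ and pick the $\beta_j$ (still outside the forbidden set) so that $\prod_j(\alpha_i-\beta_j)$ has uniform quadratic character across $i$, then rescale by a leading coefficient to absorb $c$. Invoking the theorem then produces a $1$-$d$-hull MDS code of dimension $n-1-\deg a=n-2s-\deg(g_1)+1$, and the range $1\le s\le\lfloor(n-\deg g_1-\deg g_2)/2\rfloor$ is exactly the range on which $\deg a\ge 0$ and $\deg b_0\ge 0$ simultaneously hold. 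I expect the odd-$\deg b_0$ case to be the main technical hurdle, since a single monomial choice no longer yields squares and one must exploit the room granted by $n+\deg(g_1)+\deg(g_2)\le q-1$ to balance quadratic characters.
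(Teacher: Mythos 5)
Your reduction to the polynomial theorem at the start of Section 4 has a genuine gap, and it sits exactly where you flag the ``main technical hurdle.'' Two problems. First, counting: the hypothesis $n+\deg (g_1)+\deg (g_2)\le q-1$ guarantees only \emph{one} element of $\F_q$ outside $supp(D)\cup Z(g_1)\cup Z(g_2)$, whereas your construction needs $\alpha_0$ and the roots of $b_0$ to be fresh and pairwise distinct, i.e.\ up to $1+\deg b_0=n-2s-\deg (g_1)-\deg (g_2)+1$ new elements (already in your even case with $s\ge 2$ and $\deg b_0\ge 1$ you need $\alpha_0\neq\beta_0$, which is one more point than is guaranteed). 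Second, the odd-$\deg b_0$ case is not a deferrable technicality: you would need $\beta_j$'s outside the forbidden set with $\prod_j(\alpha_i-\beta_j)$ of constant quadratic character simultaneously for all $i$; nothing in the hypotheses produces such points (for a single $\beta$ roughly half of the values $\alpha_i-\beta$ are squares), and this parity does occur in the intended applications (e.g.\ in Corollary \ref{con1:gen1} with $n$ even one has $\deg g_2=1$, so $\deg b_0=n-2s-1$ is odd). The root cause is that the invoked theorem forces both $a$ and $b$ to be polynomials with $\deg a+\deg b=n-2$, so all auxiliary support must sit at finite places and the leftover odd degree must be realized by genuine finite zeros whose quadratic characters at the $\alpha_i$ you cannot control.

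The paper avoids both issues by applying Lemma \ref{lem:generalized} directly: $G=(n-2)P_\infty$, $A=2E+(g_1)_0$ with $E=(s-1)F$ for a single rational place $F\notin \{P_\infty\}\cup supp(D)\cup supp((g_1)_0)\cup supp((g_2)_0)$ (its existence is exactly what $n+\deg (g_1)+\deg (g_2)\le q-1$ buys, and $E=0$ when $s=1$), and $B=(g_2)_0$ plus a suitable multiple of $P_\infty$. Since $(h')_0=2(f)_0+(g_1)_0+(g_2)_0$ while $(g_1)_0$ is absorbed into $A$ and $(g_2)_0$ into $B$, the divisor $2G-A-B-(h')-(n-2)P_\infty$ in (\ref{eq:char2}) is twice a divisor, so $y$ is the square of a rational function; the square-value condition is then automatic and one may take $z$ with $z^2y=1$, while all degree and parity slack is placed at $P_\infty$, which costs no rational points and never meets $supp(D)$. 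To salvage your route you would have to prove the uniform-character claim or redesign $a,b$ so that the odd part of $abh'$ at the $\alpha_i$ disappears --- which is in effect what the paper's choice of $A$ and $B$ accomplishes.
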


\begin{proof} If we take $G=(n-2)P_\infty, A=2E+(g_1)_0$ for some divisor $E$ and $B=(g_2)_0+(n-2(s-1)-\deg (g_1)-\deg (g_2))P_\infty$, then plugging $G,A,B,(h')$ into (\ref{eq:char2}), we get
$$
\begin{array}{ll}
2G-A-B-(h')-(n-2)P_\infty&\\
=2(n-2)P_\infty-2E-(g_1)_0-(g_2)_0-(n-2(s-1)-\deg (g_1)-\deg (g_2))P_\infty&\\
~~-2(f)_0-(g_1)_0-(g_2)_0+(2\deg (f)+\deg (g_1)+\deg (g_2))P_\infty-(n-2)P_\infty&\\
=2\left(-E-(g_1)_0-(g_2)_0-(f)_0+(s-1+\deg (g_1)+\deg (g_2)+\deg(f))P_\infty\right)&\\
=\frac{1}{(efg_1g_2)^2}.&\\
\end{array}
$$

We now prove the existence of a place $E$ of degree $s-1\ge 0$ such that $E\notin {\cal S}:=\{P_\infty\}\cup supp(D)\cup supp((g_1)_0)\cup supp((g_2)_0).$ 
For $s=1$, we take $E=0.$ Then obviously $E \notin {\cal S}.$ Under the condition $n+\deg (g_1)+\deg(g_2)\le q-1$, there exists a zero place $F$ of degree one such that $F\notin {\cal S}.$ Take $E=(s-1)F$ with $s\ge 2.$

A simple calculation gives $$G-A-B=\left((2s-2)P_\infty-2E\right)+\left(\deg (g_1)+\deg (g_2)P_\infty-(g_1)_0-(g_2)_0\right)=\frac{1}{(e^2g_1g_2)},$$
which is obviously a principal divisor. The rest follows from Lemma \ref{lem:generalized}.

\end{proof}

\begin{cor}\label{con00:gen1}Let $q=p^m>5$ be odd, $n\le q-1,$ $p|n$ and $(n-1)|(q-1).$ Then there exist  $1$-$d$-hull MDS codes with parameters $[n,n-2s+1]$ and  $[n,2s-1]$ for $1\le s\le \lfloor \frac{n}{2}\rfloor.$

\end{cor}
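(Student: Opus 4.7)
The plan is to invoke Theorem \ref{thm:$1$-d-generalized} after choosing the evaluation set so that $h'(x)$ collapses to a nonzero constant; this will make the decomposition hypothesis $(h')_0 = 2(f)_0+(g_1)_0+(g_2)_0$ trivially satisfied.

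First, take $\{\alpha_1,\ldots,\alpha_n\} = \{0\}\cup U_{n-1}$, where $U_{n-1}=\{\alpha\in\F_q\mid \alpha^{n-1}=1\}$ is the group of $(n-1)$-th roots of unity; this set has exactly $n$ elements because the divisibility $(n-1)\mid (q-1)$ forces $U_{n-1}\subset \F_q^*$. Then
$$h(x) = \prod_{i=1}^{n}(x-\alpha_i) = x(x^{n-1}-1) = x^n - x,$$
so $h'(x) = nx^{n-1}-1 = -1$ in $\F_q[x]$ because $p\mid n$. Hence $(h')_0 = 0$, and the decomposition required by Theorem \ref{thm:$1$-d-generalized} holds with the trivial choice $f = g_1 = g_2 = 1$. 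The supports $supp((g_1)_0)$ and $supp((g_2)_0)$ are both empty (hence disjoint), and $n+\deg(g_1)+\deg(g_2) = n \le q-1$ is precisely the standing hypothesis.

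With these choices, Theorem \ref{thm:$1$-d-generalized} produces $1$-$d$-hull MDS codes with parameters $[n, n-2s-\deg((g_1)_0)+1] = [n, n-2s+1]$ for $1\le s\le \lfloor (n-\deg(g_1)-\deg(g_2))/2\rfloor = \lfloor n/2\rfloor$, which is the first family. The second family follows by duality: the dual of an MDS code is MDS, and $hull(C^\perp) = C^\perp\cap C = hull(C)$, so $C^\perp$ is again a $1$-$d$-hull MDS code, of dimension $n-(n-2s+1) = 2s-1$.

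The only subtlety is confirming that the auxiliary place $E$ of degree $s-1$ appearing in the proof of Theorem \ref{thm:$1$-d-generalized} can still be constructed in this degenerate setting. With $g_1,g_2$ constant, the forbidden set $\mathcal{S}$ shrinks to $\{P_\infty\}\cup supp(D)$, which contains only $n+1$ rational places out of the $q+1$ rational places of $\F_q(x)$; the hypothesis $n\le q-1$ leaves at least one additional rational place $F\notin \mathcal{S}$, and taking $E=(s-1)F$ (with the convention $E=0$ when $s=1$) supplies exactly what the proof requires.
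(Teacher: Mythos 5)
Your proposal is correct and takes essentially the same route as the paper: set $h(x)=x^n-x$ (evaluation set $\{0\}\cup U_{n-1}$, which lies in $\F_q$ since $(n-1)\mid(q-1)$), note $h'(x)=-1$ because $p\mid n$, take $(f)_0=(g_1)_0=(g_2)_0=0$, and apply Theorem \ref{thm:$1$-d-generalized}, with the $[n,2s-1]$ family obtained by duality of MDS codes and the identity $hull(C^\perp)=hull(C)$. You merely make explicit some details the paper leaves implicit (the splitting of $x^n-x$, the duality step, and the availability of the auxiliary place $E$), so no further changes are needed.
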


\begin{proof} Set $h(x)=x^n-x.$ The derivative of $h(x)$ is $h'(x)=-1.$ 
Clearly, $h(x)$ has $n$ simple roots and it gives rise to $n$ distinct places of degree one. Set 
$
(f)_0=
(g_1)_0=
(g_2)_0=0.
$ 
Then $n+\deg (g_1)+\deg(g_2)=n\le (q-1).$ By applying Theorem \ref{thm:$1$-d-generalized}, the result follows.
\end{proof}

\begin{cor}\label{con0:gen1}Let $q=p^m>5$ be odd with $1\le r<m$ and $r|m$. Then there exist $1$-$d$-hull MDS codes with parameters $[p^r,p^r-2s+1]$ and  $[p^r,2s-1]$ for $1\le s\le \frac{p^r-1}{2}.$

\end{cor}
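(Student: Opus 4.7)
The plan is to mimic exactly the proof of Corollary \ref{con00:gen1} but with the specific choice $h(x)=x^{p^r}-x$. The key observation is that since $r\mid m$, the subfield $\F_{p^r}$ sits inside $\F_{p^m}=\F_q$, so the polynomial $h(x)=x^{p^r}-x$ splits completely over $\F_q$ into $p^r$ distinct linear factors. Taking the $p^r$ roots as $\alpha_1,\dots,\alpha_{p^r}$, they give rise to $n=p^r$ distinct degree-one places forming the rational divisor $D=P_{\alpha_1}+\cdots+P_{\alpha_{p^r}}$.

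Next I would compute the derivative: $h'(x)=p^r x^{p^r-1}-1=-1$ in $\F_q[x]$, since $p\mid p^r$. Thus $h'$ is a nonzero constant, and in particular $(h')_0=0$. This allows me to invoke Theorem~\ref{thm:$1$-d-generalized} with the trivial choice $f=g_1=g_2=1$, so that $(f)_0=(g_1)_0=(g_2)_0=0$. The hypothesis $n+\deg(g_1)+\deg(g_2)=p^r\le q-1=p^m-1$ is immediate from $r<m$, and trivially $\mathrm{supp}((g_1)_0)\cap\mathrm{supp}((g_2)_0)=\emptyset$.

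Applying Theorem~\ref{thm:$1$-d-generalized} then yields, for each $1\le s\le \lfloor p^r/2\rfloor=(p^r-1)/2$ (using that $p$ is odd), a $1$-$d$-hull MDS code of length $p^r$ and dimension $p^r-2s-\deg((g_1)_0)+1=p^r-2s+1$. Since the dual of an MDS code is MDS, and since taking duals preserves the hull dimension (because $\mathrm{hull}(C^\perp)=\mathrm{hull}(C)^\perp\cap\cdots$, or more simply since $C^\perp\cap C=C\cap C^\perp$), the Euclidean dual is a $1$-$d$-hull MDS code with complementary parameters $[p^r,2s-1]$. This gives both families of codes stated in the corollary.

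There is no real obstacle here; the entire argument is a direct instantiation of the previous corollary with $n=p^r$ replacing the hypothesis $p\mid n$ and $(n-1)\mid(q-1)$ by the stronger but easier-to-verify fact that $h(x)=x^{p^r}-x$ has $h'\equiv-1$ and splits completely in $\F_q$. The only point worth double-checking is the range bound $s\le (p^r-1)/2$, which follows because $p^r$ is odd (as $p$ is odd) so $\lfloor p^r/2\rfloor=(p^r-1)/2$.
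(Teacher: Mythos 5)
Your proposal matches the paper's proof essentially verbatim: the paper also sets $h(x)=x^{p^r}-x$, uses $h'(x)=-1$ with its $p^r$ simple roots giving $p^r$ rational places, takes $(f)_0=(g_1)_0=(g_2)_0=0$, checks $p^r\le q-1$, and invokes Theorem \ref{thm:$1$-d-generalized}; your added remarks (that $r\mid m$ guarantees the roots lie in $\F_q$, that $\lfloor p^r/2\rfloor=(p^r-1)/2$ since $p$ is odd, and that the dual code is again MDS with the same hull) only make explicit what the paper leaves implicit. The argument is correct.
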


\begin{proof} Set $h(x)=x^{p^r}-x.$ The derivative of $h(x)$ is $h'(x)=-1.$ 
Clearly, $h(x)$ has $n=p^r$ simple roots and it gives rise to $n$ distinct places of degree one. Set 
$
(f)_0=
(g_1)_0=
(g_2)_0=0.
$ 
Then $n+\deg (g_1)+\deg(g_2)=p^r\le (q-1).$ By applying Theorem \ref{thm:$1$-d-generalized}, the result follows.
\end{proof}

\begin{cor}\label{con1:gen1}Let $q=p^m>5$ be odd with $m\ge 1$ and $n\le (q-2)$. Then 
\begin{enumerate}
\item there exist $1$-$d$-hull MDS codes with parameters $[n,n-2s+1]$ and $[n,2s-1]$ for $1\le s\le \lfloor (n-1)/2\rfloor,$
\item there exist $1$-$d$-hull MDS codes with parameters $[n,n-2s]$ and $[n,2s]$ for $n$ even and $1\le s\le n/2-1.$
\end{enumerate}
\end{cor}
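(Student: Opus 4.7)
The plan is to apply Theorem \ref{thm:$1$-d-generalized} to a carefully chosen $h \in \F_q[x]$ whose derivative has an almost-square factorization. When $p \nmid n$ and $n \mid (q-1)$, take $h(x) = (x - \gamma)^n - \delta^n$ for some $\gamma \in \F_q$ and $\delta \in \F_q^*$. This $h$ splits into $n$ distinct linear factors $\prod_{\zeta^n = 1}(x - \gamma - \delta\zeta)$ in $\F_q$, and $h'(x) = n(x - \gamma)^{n-1}$ has zero divisor $(h')_0 = (n-1) P_\gamma$ concentrated at the single degree-one place $P_\gamma$, which lies outside $supp(D)$ because $\delta \ne 0$.

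For Part 1, I would take $(f)_0 = \lfloor (n-1)/2 \rfloor P_\gamma$, $g_1 = 1$, and $g_2 = 1$ when $n$ is odd or $g_2 = x - \gamma$ when $n$ is even. This realizes $(h')_0 = 2(f)_0 + (g_1)_0 + (g_2)_0$ with disjoint supports, $\deg g_1 = 0$, and $\deg g_2 \le 1$, and the side condition $n + \deg g_1 + \deg g_2 \le q - 1$ follows from $n \le q - 2$. Theorem \ref{thm:$1$-d-generalized} then yields $1$-$d$-hull MDS codes of parameters $[n, n - 2s + 1]$ for $1 \le s \le \lfloor (n-1)/2 \rfloor$ in either parity. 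For Part 2 with $n$ even, I would keep the same $h$ and simply swap the roles of $g_1$ and $g_2$: set $g_1 = x - \gamma$ and $g_2 = 1$. The dimension formula now gives $n - 2s - 1 + 1 = n - 2s$, and the admissible range matches $1 \le s \le n/2 - 1$. In both parts, the companion families $[n, 2s - 1]$ and $[n, 2s]$ arise by duality, since both MDS-ness and the one-dimensional hull property are preserved under dualization.

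The expected main obstacle is covering all $n \le q - 2$ uniformly: when $p \mid n$ the polynomial $(x - \gamma)^n - \delta^n$ has identically zero derivative, and when $n \nmid (q-1)$ it fails to split over $\F_q$. In these residual cases I would instead turn to $p$-linearized substitutes of the form $h(x) = P(x^p - x)$ with $P \in \F_q[y]$ selected so that $P'$ has an almost-square factorization; since $\frac{d}{dx}(x^p - x) = -1$, one obtains $h'(x) = -P'(x^p - x)$, transferring the factorization structure of $P'$ to $h'$. Ensuring simultaneously that $h$ has exactly $n$ distinct $\F_q$-roots and that $P'$ retains the required shape demands a case analysis keyed to the $p$-adic and multiplicative properties of $n$, and this case analysis is the technical heart of the argument.
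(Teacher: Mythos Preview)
Your core construction is exactly the paper's. The paper sets $h(x)=x^n-1$ (your $h$ is this polynomial up to the affine change $x\mapsto\gamma+\delta x$, which is immaterial), notes $h'(x)=nx^{n-1}$, and applies Theorem~\ref{thm:$1$-d-generalized} with precisely the decompositions you describe: for Part~1 it takes $(f)_0=\lfloor(n-1)/2\rfloor\,(x)_0$ with $(g_1)_0=0$, and $(g_2)_0=0$ or $(x)_0$ according to the parity of $n$; for Part~2 (with $n$ even) it swaps $(g_1)_0$ and $(g_2)_0$. The dual parameters are obtained just as you say. So in the regime you treat explicitly, you and the paper are doing the same thing.

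The obstacle you flag is real, and the paper's proof does not overcome it. The paper simply asserts that $x^n-1$ ``has $n$ simple roots and gives rise to $n$ distinct places of degree one,'' but this holds only when $n\mid(q-1)$; the stated hypothesis $n\le q-2$ alone does not guarantee it. Thus the paper's argument, as written, establishes the corollary only in the case you already handle, and your honesty about the gap is well placed. Your proposed linearized-polynomial workaround for the residual values of $n$ is plausible but not carried out; the paper offers no alternative for those cases either.
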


\begin{proof} Set $h(x)=x^{n}-1.$ The derivative of $h(x)$ is $h'(x)=nx^{n-1}.$ Clearly, $h(x)$ has $n$ simple roots and it gives rise to $n$ distinct places of degree one.

Consider the following setting.
\begin{enumerate}
\item For $n$ odd, take
$
(f)_0=\frac{n-1}{2}(x)_0,
(g_1)_0=
(g_2)_0=0
$ and $1\le s\le \lfloor n/2\rfloor.$
For $n$ even, take
$
(f)_0=\frac{n-2}{2}(x)_0,
(g_1)_0=0,
(g_2)_0=(x)_0
$ and $1\le s\le \lfloor (n-1)/2\rfloor.$ Hence for both cases, $1\le s\le \lfloor (n-1)/2\rfloor.$

\item  For $n$ even, take 
$
(f)_0=\frac{n-2}{2}(x)_0,
(g_1)_0=(x)_0,
(g_2)_0=0.
$
\end{enumerate}

With the above setting, we get $n+\deg (g_1)+\deg(g_2)\le n+1\le (q-1).$ By applying Theorem \ref{thm:$1$-d-generalized}, the result follows.
\end{proof}

By considering additive cosets of some $\F_p$-subspaces of $\F_q$, we get the following construction.
\begin{cor}\label{con3:gen1}Let $q=p^m>5$ be odd with $m\ge 2$, $ r\le m-1$ and $r|m$. Set $N=(t+1)p^r$ with $\gcd (p,t+1)=1$ and $1\le t\le  \lfloor \frac{q-1-p^r}{2p^r} \rfloor$. Then there exist $1$-$d$-hull MDS codes with parameters $[N,N-2s+1]$ and $[N,2s-1]$ for $1\le s\le \lfloor p^r/2\rfloor .$
\end{cor}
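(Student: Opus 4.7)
The plan is to construct the evaluation set $U\subset\F_q$ as a union of additive cosets of the subfield $\F_{p^r}\subseteq\F_q$---which exists because $r\mid m$---and then invoke Theorem \ref{thm:$1$-d-generalized}. Since the upper bound on $t$ easily implies $t+1\leq q/p^r=p^{m-r}$, I can pick coset representatives $\beta_0=0,\beta_1,\dots,\beta_t\in\F_q$ and put $U=\bigcup_{i=0}^t(\beta_i+\F_{p^r})$, so $|U|=(t+1)p^r=N$. The key structural identity comes from noting that $L(x)=x^{p^r}-x$ equals $\prod_{v\in\F_{p^r}}(x-v)$ and is $\F_p$-linearized, so
\[
h(x)=\prod_{\alpha\in U}(x-\alpha)=\prod_{i=0}^{t}\bigl(L(x)-c_i\bigr)=M\bigl(L(x)\bigr),
\]
where $M(y)=\prod_{i=0}^{t}(y-c_i)$ and $c_i=L(\beta_i)=\beta_i^{p^r}-\beta_i$ are distinct.

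Next I would differentiate using the chain rule. Because $L'(x)=-1$ in characteristic $p$, we get $h'(x)=-M'(L(x))$; the hypothesis $\gcd(p,t+1)=1$ ensures the leading coefficient $t+1$ of $M'(y)$ does not vanish in $\F_q$, so $\deg h'=tp^r$. For any $\alpha\in\beta_i+\F_{p^r}$ we have $L(\alpha)=c_i$, hence
\[
h'(\alpha)=-M'(c_i)=-\prod_{j\neq i}(c_i-c_j)\neq 0.
\]
This yields the critical disjointness $supp((h')_0)\cap supp(D)=\emptyset$ for $D=\sum_{\alpha\in U}P_\alpha$.

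I would then split $h'$ into its square and squarefree parts in $\F_q[x]$, writing $h'=c\cdot f^{2}\cdot\tilde g$ with $\tilde g$ squarefree and coprime to $f$, and set $g_1=1$, $g_2=\tilde g$. This delivers the required decomposition
\[
(h')_0=2(f)_0+(g_1)_0+(g_2)_0,
\]
with $supp((g_1)_0)\cap supp((g_2)_0)=\emptyset$ trivial and $supp((g_2)_0)\subseteq supp((h')_0)$ disjoint from $supp(D)$ by the previous step. The inequality $N+\deg g_1+\deg g_2\leq N+tp^r=(2t+1)p^r\leq q-1$ is exactly the stated upper bound $t\leq\lfloor(q-1-p^r)/(2p^r)\rfloor$. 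Applying Theorem \ref{thm:$1$-d-generalized} then produces $1$-$d$-hull MDS codes $[N,N-2s+1]$ for $1\leq s\leq\lfloor(N-\deg g_2)/2\rfloor$, and since $\deg g_2\leq tp^r$, the range $1\leq s\leq\lfloor p^r/2\rfloor$ is covered. The dual codes $[N,2s-1]$ are automatically $1$-$d$-hull MDS because $hull(C^\perp)=C^\perp\cap C=hull(C)$.

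I expect the main obstacle to be organizational rather than analytic: three numerical conditions must be matched up simultaneously---existence of enough cosets ($t+1\leq p^{m-r}$), survival of the leading coefficient in characteristic $p$ (the role of $\gcd(p,t+1)=1$), and the degree estimate $N+\deg g_2\leq q-1$ (the role of the stated upper bound on $t$)---but each falls out immediately from the hypotheses. The one real computation is the chain-rule identity $h'=-M'(L(x))$ together with the observation $M'(c_i)\neq 0$; these together deliver both the nonvanishing of $h'$ on $U$ and the correct control on $\deg g_2$, after which everything is routine application of Theorem \ref{thm:$1$-d-generalized}.
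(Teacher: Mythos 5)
Your construction is essentially the paper's own proof: the same evaluation set formed as a union of $t+1$ additive cosets of $\F_{p^r}$, the same composition $h(x)=b(a(x))$ with $a(x)=x^{p^r}-x$, the same use of $a'(x)=-1$ and $\gcd(p,t+1)=1$ to get $\deg h'=tp^r$ and nonvanishing of $h'$ on the evaluation points, the same degree bound $(2t+1)p^r\le q-1$, and the same appeal to Theorem \ref{thm:$1$-d-generalized}. The only cosmetic difference is that you split $h'$ into square and squarefree parts before choosing $g_2$ (the asserted coprimality of the two parts is not always available, but it is also not needed), whereas the paper simply takes $(f)_0=(g_1)_0=0$ and $(g_2)_0=(h')_0$; both choices satisfy the theorem's hypotheses and give the claimed range $1\le s\le \lfloor p^r/2\rfloor$.
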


\begin{proof} Let $U_0=\F_{p^r}$ and $U_i=(\beta_i+U_0)_{1\le i\le t}$ be $t$ nonzero distinct additive cosets of $U_0.$ Put $U=U_0\cup \left(\bigcup\limits_{i=1}^t\beta_i+U_0\right)$. Write

$$a(x)=(x^{p^r}-x), b(x)=\prod\limits_{i=0}^{t}(x-a(\beta_i)),h(x)=b(a(x)),$$
where $\beta_0=0.$ The derivative of $h(x)$ is $h'(x)=b'(a(x))a'(x)$, and thus $\deg (h')=(\deg (b)-1)\deg (a)=tp^r$ if $\gcd (t+1,p)=1.$
It is not difficult to check that $h(x)$ has all its $N$ simple roots in $U$, and it gives rise to $N$ distinct places of degree one. Setting $(f)_0=0,(g_1)=0$ and $(g_2)_0=(h')_0$, we get $N+\deg (g_1)+\deg (g_2)= (t+1)p^r+tp^r\le q-1$ if  $1\le t\le \lfloor \frac{q-p^r-1}{2p^r}\rfloor.$
By applying Theorem \ref{thm:$1$-d-generalized}, the result follows.
\end{proof}

By considering cosets of the multiplicative subgroup of $\F_q^*$ of order $n$, we get the following construction.
\begin{cor}\label{con2:gen1}Let $q=p^m>5$ be odd with $m\ge 1$ and $n|(q-1)$. Set $N=(t+1)n$ with $1\le t\le \lfloor \frac{q-n-2}{2n}\rfloor.$ Then there exist $1$-$d$-hull MDS codes with parameters as follows:
\begin{enumerate}
\item  $[N,n-2s+1]$ and $[N,tn+2s-1]$ for $1\le s\le \lfloor (n-1)/2\rfloor$, $p\not|(t+1),$
\item  $[N,2n-2s+1]$ and $[N,(t-1)n+2s-1]$ for $1\le s\le \lfloor (2n-1)/2\rfloor$, $p|(t+1),$
\item  $[N,n-2s]$ and $[N,tn+2s]$ for $n$ even, $1\le s\le \lfloor (n-1)/2\rfloor$, $p\not|(t+1),$
\item  $[N,2n-2s]$ and $[N,(t-1)n+2s]$ for $n$ even, $1\le s\le \lfloor (2n-1)/2\rfloor$, $p|(t+1),$
\item  $[N,N-2s+1]$ and $[N,2s-1]$ for $1\le s\le \lfloor (n-1)/2\rfloor$, $p\not|(t+1),$
\item  $[N,N-2s+1]$ and $[N,2s-1]$ for $1\le s\le \lfloor (2n-1)/2\rfloor$, $p|(t+1)$,
\item  $[N,N-2s]$ and $[N,2s]$ for $n$ even, $1\le s\le \lfloor (n-1)/2\rfloor$, $p\not|(t+1),$
\item  $[N,N-2s]$ and $[N,2s]$ for $n$ even, $1\le s\le \lfloor (2n-1)/2\rfloor$, $p|(t+1).$
\end{enumerate}
\end{cor}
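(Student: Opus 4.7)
The strategy mirrors Corollaries \ref{con1:gen1} and \ref{con3:gen1}: build a polynomial $h(x)$ whose roots are $N$ distinct points of $\F_q$, compute $(h')_0$, and invoke Theorem \ref{thm:$1$-d-generalized} with a decomposition of $(h')_0$ tailored to each of the eight claimed parameter pairs. Since the dual of an MDS code is MDS, producing the primal $[N,k]$ code in each case automatically yields the $[N,N-k]$ companion.

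With $U_0 := \{\alpha \in \F_q : \alpha^n = 1\}$, fix $\beta_0 = 1$ and pick $\beta_1,\dots,\beta_t \in \F_q^*$ representing distinct nontrivial cosets of $U_0$ (available since $(q-1)/n-1 \ge t$ under the bound on $t$). Set $\gamma_i := \beta_i^n$, $b(y) := \prod_{i=0}^t (y-\gamma_i)$, and
$$ h(x) := b(x^n) = \prod_{i=0}^t (x^n-\gamma_i). $$
Because $n \mid (q-1)$ and the $\gamma_i$ are distinct, $h$ has $N=(t+1)n$ simple roots $U = \bigcup_i \beta_i U_0 \subset \F_q^*$, giving $N$ degree-one places, all disjoint from $P_0$. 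Differentiating gives
$$ h'(x) = n x^{n-1} b'(x^n), \qquad (h')_0 = (n-1)P_0 + (b'(x^n))_0, $$
with $b'(\gamma_i)\ne 0$ since $b$ is separable. When $\gcd(p,t+1)=1$ (Case A), $\deg b' = t$ and $\deg(b'(x^n))_0 = nt$; when $p\mid (t+1)$ (Case B), the leading coefficient of $b'$ vanishes, and by choosing the $\beta_i$ so that $\sum_i \gamma_i\ne 0$ we force $\deg b' = t-1$ and $\deg(b'(x^n))_0 = n(t-1)$.

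To each of the eight cases I attach a decomposition $(h')_0 = 2(f)_0 + (g_1)_0 + (g_2)_0$ with disjoint supports for $g_1,g_2$. Absorb $2\lfloor (n-1)/2\rfloor P_0$ into $(f)_0$; the remaining square-free remainder is $(b'(x^n))_0$, plus a residual $P_0$ when $n$ is even. Taking $(g_1)_0 = (b'(x^n))_0$ and placing the residual $P_0$, if any, in $(g_2)_0$ produces cases 1 and 2, of dimensions $n-2s+1$ (Case A) and $2n-2s+1$ (Case B); moving the residual $P_0$ into $(g_1)_0$ instead gives cases 3 and 4 (which therefore require $n$ even), with the dimension dropped by one. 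Swapping the roles of $(g_1)_0$ and $(g_2)_0$ in these four configurations yields cases 5--8, with dimensions $N-2s+1$ and $N-2s$. In each case the bound $1\le s\le \lfloor (N-\deg g_1-\deg g_2)/2\rfloor$ of Theorem \ref{thm:$1$-d-generalized} reduces to $\lfloor (n-1)/2\rfloor$ in Case A and (together with the paper's convention $k>1$) to $\lfloor (2n-1)/2\rfloor$ in Case B, while the length condition $N+\deg g_1+\deg g_2\le q-1$ reduces to $(2t+1)n+1\le q-1$, which is exactly $t\le \lfloor (q-n-2)/(2n)\rfloor$.

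The main subtlety is Case B: the claim that $\deg b' = t-1$ requires choosing the $\beta_i$ so that $\sum_i \gamma_i \ne 0$, which is a single nonvanishing condition on the coset representatives. This is easily satisfiable given the abundance of available cosets guaranteed by the hypothesis on $t$. Once the correct decomposition in each of the eight cases is written down, Theorem \ref{thm:$1$-d-generalized} produces the stated $1$-$d$-hull MDS codes directly.
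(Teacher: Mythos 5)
Your proposal follows essentially the same route as the paper's proof: the same polynomial $h(x)=\prod_{i=0}^{t}(x^n-\beta_i^n)$ built from the order-$n$ subgroup and $t$ of its cosets, the same computation $(h')_0=(n-1)(x)_0+(g)_0$ with $\deg g=tn$ or $(t-1)n$ according to whether $p\nmid(t+1)$ or $p\mid(t+1)$, the same eight assignments of $(f)_0,(g_1)_0,(g_2)_0$ (absorbing the even part of $(n-1)(x)_0$ into $(f)_0$, shuffling the residual $P_0$ between $g_1$ and $g_2$, then swapping the roles of $g_1$ and $g_2$), and the same invocation of Theorem \ref{thm:$1$-d-generalized} together with MDS duality and the identical length bound yielding $t\le\lfloor(q-n-2)/(2n)\rfloor$. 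Your extra requirement $\sum_i\beta_i^n\ne 0$ when $p\mid(t+1)$ is a sensible refinement ensuring $\deg b'=t-1$ exactly, a point the paper leaves implicit, but otherwise the argument coincides with the paper's.
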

\begin{proof} Let $U_n$ be a multiplicative subgroup of $\F_q^*$ of order $n.$
Let $\beta_1U_n,\hdots,\beta_tU_n$ be $t$ nonzero cosets of $U_n.$ Write

$$h(x)=(x^{n}-1)\prod\limits_{\lambda_1\in \beta_1U_n}(x-\lambda_1)\cdots \prod\limits_{\lambda_t\in \beta_tU_n}(x-\lambda_t).$$
The derivative of $h(x)$ is given by 
$$
h'(x)=nx^{n-1}\left(\sum\limits_{i=0}^t\prod\limits_{j=0,j\not=i}^t(x^n-\beta_j^n)\right),
$$
where $\beta_0=1.$ Clearly, $h(x)$ has $N=(t+1)n$ simple roots, and it gives rise to $N$ distinct places of degree one. Moreover, we have $(h')_0=(n-1)(x)_0+(g)_0$ for some $g\in \F_q[x]$ with the degree of $g$ equal to $tn$ if $p\not|(t+1)$ and equal to $(t-1)n$ if $p|(t+1) .$

For the proof of points $1)$ and $2)$, we have the following setting.
\begin{enumerate}[(a)]
\item If $n$ is odd, then we set 
$
(f)_0=\frac{n-1}{2}(x)_0,
(g_1)_0=(g)_0,
(g_2)_0=0.
$ 

\item If $n$ is even, then we set 
$
(f)_0=\frac{n-2}{2}(x)_0,
(g_1)_0=(g)_0,
(g_2)_0=(x)_0.
$
\end{enumerate}

For the proof of points $3)$ and $4)$, we set, for $n$ even,
$
(f)_0=\frac{n-2}{2}(x)_0,
(g_1)_0=(x)_0+(g)_0,
(g_2)_0=0.
$ 

For the proof of points $5)$ and $6)$ we have the following setting.
\begin{enumerate}[(a)]
\item If $n$ is odd, then we set 
$
(f)_0=\frac{n-1}{2}(x)_0,
(g_1)_0=0,
(g_2)_0=(g)_0.
$

\item If $n$ is even, then we set, 
$
(f)_0=\frac{n-2}{2}(x)_0,
(g_1)_0=0,
(g_2)_0=(x)_0+(g)_0.
$ 
\end{enumerate}

For the proof of points $7)$ and $8)$, we set, for $n$ even,
$
(f)_0=\frac{n-2}{2}(x)_0,
(g_1)_0=(x)_0,
(g_2)_0=(g)_0.
$

With the above setting, we get $N+\deg (g_1)+\deg (g_2)\le N+\deg (g)+1\le (t+1)n+tn+1\le q-1$ if  $1\le t\le \lfloor \frac{q-n-2}{2n}\rfloor.$
By applying Theorem \ref{thm:$1$-d-generalized}, the result follows.
\end{proof}
\begin{exam}Take $q=3^4,n=8$ and $t=1.$ By applying Corollary \ref{con2:gen1} 7) with $s=3$, we obtain a $1$-$d$-hull MDS code $C_6$ with parameters $[16,10,7]$, where its generator matrix is given by

{\scriptsize
$$
C_6=\left(
\begin{array}{ccccccc}
&w^{28}&w^{5}&w^{33}&w^{50}&w^{23}&w^{75}\\
&w^{41}&w^{35}&w^{47}&w^{44}&w^{18}&w^{75}\\
&w^{9}&w^{35}&w^{78}&w^{45}&w^{35}&w^{57}\\
&w^{68}&w^{31}&w^{12}&w^{4}&w^{30}&w^{8}\\
I_{10}&w^{66}&w^{39}&w^{73}&w^{67}&w^{8}&w^{46}\\
&w^{32}&w^{55}&2&w^{66}&w^{34}&w^{42}\\
&w^{53}&w&w^{17}&w^{74}&w^{33}&w^{72}\\
&w^{24}&w^{70}&w^{11}&w^{38}&w^{24}&w^{15}\\
&w^{11}&w^{25}&w^{9}&w^{16}&w^{77}&w^{70}\\
&w^{25}&w^{28}&w^{77}&2&w^{57}&w^{75}\\
\end{array}
\right).
$$
}
\end{exam}

\begin{exam}Take $q=3^4,n=8$ and $t=2.$ By applying Corollary \ref{con2:gen1} 8) with $s=2$, we obtain a $1$-$d$-hull MDS code $C_7$ with parameters $[24,20,5]$, where its generator matrix is given by
%

{\scriptsize
$$
C_7=\left(
\begin{array}{cccccccccccccccccccccccc}

&w^{2}&w^{28}&w^{44}&w^{77}\\
&w^{38}&w^{35}&w^{7}&w^{45}\\
&w^{64}&w^{62}&w^{76}&w^{79}\\
&2&w^{31}&w^{49}&w^{29}\\
&w^{8}&w^{59}&w^{65}&w^{36}\\
&w^{48}&w^{5}&w^{6}&w^{45}\\
&w^{23}&w^{44}&w^{68}&w^{69}\\
&w^{22}&w^{35}&w^{22}&w^{41}\\
&w^{29}&w^{49}&w^{28}&w^{17}\\
I_{20}&w^{34}&w^{41}&w&w^{21}\\
&w^{10}&w^{76}&w^{39}&w^{11}\\
&w^{46}&w^{11}&w^{34}&w^{8}\\
&w^{77}&w^{28}&w^{41}&w^{41}\\
&w&w^{2}&w^{7}&w^{14}\\
&w^{37}&w^{56}&w^{24}&w^{23}\\
&w^{65}&w^{46}&w^{37}&w^{20}\\
&w^{64}&w^{33}&w^{19}&w^{68}\\
&w^{37}&w^{77}&w^{51}&1\\
&w^{37}&w^{11}&w^{55}&w^{47}\\
&w^{27}&w^{2}&w^{28}&w^{68}\\
\end{array}
\right).
$$
}
\end{exam}


\begin{rem} From Corollary \ref{con2:gen1}, one may get longer $1$-$d$-hull MDS codes by considering larger $t,$ that is, $t\ge \lfloor \frac{q-n-2}{2n}\rfloor+1$. In this situation, the existence of such codes depends on the existence of a zero place $F$ of degree one such that $F\notin supp(D)\cup supp((h')_0),$ which depends on the number of zeros of the derivative $h'(x).$ For example, applying $q=3^4,n=8$ to Corollary \ref{con2:gen1} 7)--8), the value of $t$ lies between $0$ and $4$, and we get $1$-$d$-hull MDS codes of lengths $N=(t+1)n$. However, for $t=5$, using Magma \cite{Mag}, we can find a zero place $F$ of degree one such that $F\notin supp(D)\cup supp((h')_0)$, and thus we obtain $1$-$d$-hull MDS codes with parameters $[48, 46,3],[48, 44,5],[48, 42,7],[48, 40,9],[48, 38,11],$ $[48, 36,13],[48, 34,15].$
\end{rem}

\section{Conclusion}\label{sec:conclusion}
In this paper, we deal with linear codes having one dimensional hull. The hull is defined with respect to Euclidean inner product, and we construct families of $1$-$d$-hull MDS codes from algebraic geometry codes of genus zero. For the future work, with the same spirit, it is worth considering the hull with respect to Hermitian inner product on the one hand, and consider codes from higher genus and with higher dimensional hull on the other hand.
%
\medskip
%

\end{document}